\newcolumntype{d}[1]{D{.}{.}{#1}}
\DeclareMathOperator*{\argmax}{argmax}
\newtheorem{proposition}{Proposition}
\newtheorem{lemma}{Lemma}
\newtheorem{assumption}{Assumption}
\newtheorem{corollary}{Corollary}
\newtheorem{example}{Example}
\newtheorem*{remark}{Remark}
\def\@xfootnote[#1]{%
  \protected@xdef\@thefnmark{#1}%
  \@footnotemark\@footnotetext}
\def\titlefootnote{\ifx\protect\@typeset@protect\expandafter\footnote\else\expandafter\@gobble\fi}
\begin{document}

\title{Personalized Subsidy Rules\thanks{Alphabetical ordering; both authors contributed equally to this work. We are grateful to Graham Elliott, Yixiao Sun, and Kaspar W\"uthrich for their constant support of this paper. We also thank Le-Yu Chen, Wei-Lin Chen, Chen Lin, Ming-Jen Lin, and participants at the UC San Diego econometrics seminar and the National Taiwan University seminar.}}

\author{Yu-Chang Chen\thanks{Department of Economics, University of California, San Diego. Email: yuc391@ucsd.edu.} \quad \quad Haitian Xie\thanks{Department of Economics, University of California, San Diego. Email: hax082@ucsd.edu.} }

\date{\textbf{\today}}

\maketitle
\thispagestyle{empty}
\vspace{-2em}

\begin{abstract} 
	\onehalfspacing
    Subsidies are commonly used to encourage behaviors that can lead to short- or long-term benefits. Typical examples include subsidized job training programs and provisions of preventive health products, in which both behavioral responses and associated gains can exhibit heterogeneity. This study uses the marginal treatment effect (MTE) framework to study personalized assignments of subsidies based on individual characteristics. First, we derive the optimality condition for a welfare-maximizing subsidy rule by showing that the welfare can be represented as a function of the MTE. Next, we show that subsidies generally result in better welfare than directly mandating the encouraged behavior because subsidy rules implicitly target individuals through unobserved heterogeneity in the behavioral response. When there is positive selection, that is, when individuals with higher returns are more likely to select the encouraged behavior, the optimal subsidy rule achieves the first-best welfare, which is the optimal welfare if a policy-maker can observe individuals' private information. 
 	We then provide methods to (partially) identify the optimal subsidy rule when the MTE is identified and unidentified. Particularly, positive selection allows for the point identification of the optimal subsidy rule even when the MTE curve is not. As an empirical application, we study the optimal wage subsidy using the experimental data from the Jordan New Opportunities for Women pilot study.

	
	\bigskip 
	\noindent
{\bf Keywords:} Heterogeneous Treatment Effects, Marginal Treatment Effect, Partial Identification, Point Identification, Positive Selection, Shape Restrictions, Welfare Maximization.
\end{abstract}

\newpage
\section{Introduction}

Governments worldwide have offered various subsidies, such as tuition subsidies for college education, price subsidies for preventive health products, and childcare subsidies for certified daycare services, to encourage self-serving and socially beneficial behaviors among households and individuals. Owing to its relevance and prevalence, many studies across fields in economics have contributed to the evaluation of subsidy programs. For example, in labor economics, empirical works have used exogenous variations in the choice of education to estimate the subsidy on return to tuition \citep[][]{ichimura2002semiparametric, carneiro2011estimating}. In development economics, researchers have examined the impact of subsidies on the take-up of insecticide-treated bed nets \citep[][]{cohen2010free}.
These studies demonstrate the importance of informing the policy-maker on the optimal allocation of subsidies. 

In this paper, we examine subsidy rules that provide personalized subsidies to maximize the targeted population's welfare. A \emph{subsidy rule}, also called (subsidy-based) policy, is defined as an assignment rule that maps individuals to the amount of subsidy based on their observable characteristics. As the subsidy's effect on both the take-up and welfare outcomes could vary across individuals, an efficient subsidy scheme should consider the welfare effect, behavioral response, and cost of subsidies altogether. Accordingly, an ideal allocation of subsidies increases the take-up among those who benefit the most from it while avoiding those unlikely to benefit from the subsidy. As an advantage, our approach allows for flexible specifications of the cost functions, which can depend on the take-up, individual characteristics, and the amount of subsidy.



We adopt the \emph{marginal treatment effect} (MTE) framework \citep{heckman2005structural} to study the subsidy problem. In our setting, the subsidy is an instrumental variable in the following sense. First, the subsidy only affects the take-up of the subsidized behavior (the treatment) but is otherwise excluded from the outcome of interest. In case of tuition, we are effectively assuming that the subsidy affects a student's future income through education received in school. Second, the subsidy is assumed to be randomly assigned (conditional on the covariates) to the population from which our data are sampled, which holds true for any experiment conducted using subsidies. The exclusion restriction and exogeneity of the subsidy help in causally identifying the treatment effect and the individuals' treatment selection process, which in turn helps in identifying the subsidy rule that maximizes the counterfactual welfare by allocating subsidies based on observed heterogeneity.


The MTE framework is appropriate for studying the subsidy problem because of its straightforward yet flexible method for modeling the treatment effects and the treatment selection process. Explicit modeling of the subsidy as an instrument in the treatment selection process helps improve both the interpretability of the results and the practicability of our procedure. Structural assumptions backed by economic theory can be easily incorporated into the model to facilitate identification and estimation of the optimal subsidy rule. As an example, when individuals with higher returns are more likely to select the treatment is the case of positive selection. This can be modeled by assuming that the MTE is decreasing. Our study shows that there are numerous implications of a monotonic MTE in the optimal subsidy problem.

The subsidy problem is analyzed by first characterizing the welfare of subsidy rules using the MTE curve. The MTE can serve as the building block for other treatment parameters such as \emph{average treatment effect} (ATE) and \emph{policy-relevant treatment effect} (PRTE) \citep{heckman2005structural} after suitable reweighting. Consistent with these results, we show that the welfare of a given subsidy rule can also be expressed as a function of the MTE. The welfare characterization can be used to derive necessary conditions for a subsidy rule to be optimal. We show that the necessary condition becomes sufficient when there is positive selection.

Subsidy-based policies are popular because policies that directly mandate the treatment, termed \emph{direct policies}, are not always viable to the policy-maker owing to practical issues. This study shows that assigning subsidies also achieves higher welfare than directly mandating the treatment, which justifies the subsidy rules. Subsidy rules enjoy this property because, as individuals make treatment choices, subsidy-based policies implicitly target individuals based on their unobserved (to the policy-maker) heterogeneity. For example, individuals make schooling decisions based on any likely private information regarding their returns to higher education. Consequentially, different subsidies may draw students with different returns. If the selection into education is positive, smaller subsidy amounts tend to draw students with higher returns when all other things are equal. Therefore, if carefully designed, a subsidy-based policy can leverage individuals' private information on their returns and may achieve better welfare than direct policies that neglect this information. 

The welfare of subsidy rules has another surprising characteristic besides the advantage of subsidy-based policies compared with direct policies. We show that, under positive selection, the optimal subsidy rule achieves the \emph{first-best} welfare, which is defined as the highest attainable welfare if the policy-maker can observe the individuals' private information used in the treatment choice decisions. In practice, it is not feasible for a policy-maker to observe private information. However, using subsidies, the policy-maker can simulate the effect of the \emph{infeasible policies} that target unobserved information. These subsidy rules are the best because when the MTE is decreasing, the subsidy rule can replicate the (infeasible) first-best policy using individuals' self-selection into the treatment.

Next, we identify the optimal policy.
Considering the result of welfare representation, the identification problem becomes straightforward if the MTE is known. However, point identification of an entire MTE curve requires the support of a propensity score to be large, which is often impossible. We study two approaches to circumvent this issue. First, under positive selection, we can identify the optimal subsidy provided that we can identify the zero of the MTE curve. Second, we can incorporate other shape restrictions on the MTE and identify a partial ranking among the subsidy rules similar to \cite{kasy2016partial}.

From a practical perspective, this study addresses the problem of learning the optimal allocation of subsidies from (quasi-) experimental data. Our characterization result simplifies the optimal subsidy problem to the identification of the MTE curve. From this perspective, this study bridges the gap between the problems of policy evaluation and policy design. In particular, the dual role of subsidies emphasized in this study is also relevant to the application. When estimating the effects, subsidies serve as instrumental variables for estimating the treatment effects. When used as policy tools, subsidies can serve as the subject of assignments. 

We illustrate the theoretical results by an empirical application.
Using the experimental data from \cite{groh2016wage}, we apply our method to identify the optimal wage subsidies for female college students in Jordan. Although \cite{groh2016wage} concluded that wage subsidies are ineffective for increasing the long-term labor market participation, our analysis indicates that their conclusion may instead be a consequence of an inefficient allocation of subsidies in the experiment. First, we find that in the experiment, the subsidy is substantially higher than the optimal amount suggested by our method. Thus, individuals with low returns may be negatively selected, leading to a lesser effect of subsidies identified in the experiment. We also find that targeting students' majors can substantially increase efficiency. Specifically, medical school students tend to receive greater benefits from a subsidy, and the amount assigned to them should be different from students of other backgrounds. Therefore, instead of concluding that wage subsidy is ineffective, our result shows that well-designed wage subsidies can be an effective tool for boosting female long-term labor market participation.

This study is presented as follows. The remaining part of this section discusses the literature. Section \ref{sec:setup} introduces the model setup and the subsidy problem. Section \ref{sec:welfare_opt_policy} presents the welfare representation results and the optimality conditions for welfare-maximizing subsidy rules. In Section \ref{sec:welfare_property}, we compare the optimal welfare of subsidy rules with other types of policies, including direct and infeasible policies. Section \ref{sec:weflare_ranking} presents a set of results regarding the identifications of optimal subsidies. 
Section \ref{sec:empirical} presents the empirical application. Section \ref{sec:conclusion} concludes. All proofs are collected in Appendix \ref{sec:proofs}.


\subsection{Connection to the literature}

This study contributes to a growing literature on personalized treatment rules, including \cite{manski2004statistical,dehejia2005program, hirano2009asymptotics,stoye2009minimax,bhattacharya2012inferring,kitagawa2018should,athey2021policy}. For a recent review of the subject, see \cite{hirano2019statistical}. Most studies have focused on the assignment of treatments; however, the case of assigning subsidies as encouragement to treatment is understudied.\footnote{The only exception found is \cite{qiu2020optimal}, who study the optimal assignment of binary instrumental variables as ``encouragements'' to treatment take-ups. This study, however, examines the case of a continuous instrument, which is more relevant, for example, when considering monetary subsidies.} Although the existing methods for treatment assignments can be applied to subsidy assignments by adopting an \emph{intention-to-treat} approach that essentially assigns subsidies based on reduced-form estimates of subsidy effects \citep[][]{bhattacharya2012inferring,kitagawa2018should}, we argue that a selection-model approach that explicitly models treatment choice as a function of the assigned subsidy can have its advantages. First, our approach allows the realized cost of subsidies to depend on treatment take-ups, which the intention-to-treat approaches cannot as there is no model of treatment take-ups. Second, by explicitly modeling treatment choice, we can incorporate shape restrictions in the selection equation to aid identification and estimation, such as mandating positive effects of the subsidy on treatment take-ups \citep{horowitz2017nonparametric}. Third, the examined subsidy is continuous, while studies on treatment assignments have mostly focused on the case of a binary treatment.

Although most studies on individualized treatment rules consider policy learning under unconfoundedness, recent studies have examined cases when endogeneity arises for reasons such as noncompliances or omitted variable bias \citep{kasy2016partial,cui2020semiparametric,qiu2020optimal,athey2021policy, Byambadalai2021Identification, pu2021estimating}. We use instrumental variables for identifications, and our study is no exception. Similar to \cite{kasy2016partial}, we study the welfare rankings of policies when the treatment effect is only partially identified, although we focus on the assignment of instruments rather than the treatment itself. \cite{pu2021estimating} introduced IV-optimality, a new notion of optimality, for treatment assignment rules that maximize the worst-case welfare among the identification regions. They also derive a bound on the loss in the welfare of IV-optimal rules relative to the first-best rule, which assigns treatments whenever the effect is positive. This study shows that the optimal subsidy rule outperforms the first-best policy that assigns treatments.

This study analyzes the subsidy rules using the marginal treatment effects (MTE) framework \citep{heckman1999local, heckman2001local, heckman2005structural, heckman2007econometric}. The MTEs can be identified by the method of \emph{local instrumental variables} and can be used for predicting the effects of hypothetical policies. Studies have proposed new approaches to its identification and estimation \citep{carneiro2009estimating, brinch2017beyond, mogstad2018using, mogstad2020policy, sasaki2021estimation} and to apply MTE framework to various research topics, such as unconditional quantile effects \citep{martinez2020identification} and external validity \citep{kowalski2018examine}. Among these studies, our study is most closely related to \cite{sasaki2020welfare}, which also applies the MTE framework to statistical decision rules. However, they focus on the assignment of treatment instead of subsidies. They apply the MTE framework to the method of empirical welfare maximization \citep{kitagawa2018should}, where policies are assumed to lie in a known policy class restricted to avoid complexity.\footnote{Namely, the policy class has a finite $VC$-dimension.} We do not restrict our candidate policies. Additionally, we emphasize the identification and welfare properties, while \cite{sasaki2020welfare} emphasized the estimation.

\section{The Model} \label{sec:setup}
In this section, we introduce the model's setup, including the optimal subsidy problem faced by a policy-maker. Importantly, the subsidy rule only affects the social welfare through its effect on the behavior response, which can eventually impact the outcome. We begin by introducing the MTE framework.

\subsection{Data-generating process: the MTE framework}

There are two treatment statuses, $1$ and $0$, referred to as \emph{treated} and \emph{untreated}, respectively.
Let $Y_1$ and $Y_0$ be the potential outcomes under the treatment status $1$ and $0$, respectively. The potential outcomes are related to the observable covariates as
\begin{align}\label{eqn:outcome_equation} 
Y_1 = \mu_1(X,U_1) \text{, and } Y_0 = \mu_0(X,U_0),
\end{align}
where $X$ is a vector of the observed covariates that affect the potential outcomes, $\mu_1$ and $\mu_0$ are unknown functions, and $U_1$ and $U_0$ are unobserved random variables. Let $D$ denote the binary variable that indicates the treatment status. Specifically, $D = 1$ if an individual receives the treatment and $D=0$ if an individual does not receive the treatment. The realized outcome is $Y = DY_1 + (1-D)Y_0$. 

In our setup, we distinguish two types of instruments. The first type of instrument, denoted by $Z$, is an instrument (or subsidy) randomly assigned in the data but could be manipulated by the policy-maker as policy tools. Specifically, the variable $Z$ has two roles. First, $Z$ is an instrumental variable exogenously set in the data and can facilitate the identification of treatment effects. Second, $Z$ is the subsidy, a policy tool that the policy-maker can use to influence individuals' treatment take-ups. 

The second type of instrument, denoted by $W$, is an instrument that only aids in the identification of treatment effects and not subject to the policy-maker's control. Generally, the existence of $W$ can enlarge the identification region of the MTE curve and help identify the optimal subsidy rule. However, to apply our method, one must have a nonmanipulatable instrument $W$. 

The treatment take-up is modeled by a latent-index utility model, where the selection into the treatment status depends on the individual characteristics and the instrumental variables. Given $(X,W,Z)$, the treatment take-up $D$ is determined by
\begin{align} \label{eqn:treatment_data} 	
D = \mathbf{1}\{g(X,W,Z)  \geq U_D\}, 
\end{align}
where $U_D$ is the unobserved heterogeneity in the treatment selection process. We can interpret $U_D$ as resistance to treatment take-ups: holding $(X,W,Z)$ fixed, individuals with lower $U_D$ are more likely to select into the treatment. We allow $U_D$ to correlate with $(U_1,U_0)$. That is, the resistance $U_D$ represents an individual's private information regarding the potential outcomes $(Y_1,Y_0)$. The individual uses the private information to aid the treatment choice decision as modeled by Equation (\ref{eqn:treatment_data}). In practice, the policy-maker does not observe the private information $U_D$.

The following example illustrates the different variables introduced earlier.

\begin{example}[Tuition Subsidy]
	In terms of the tuition subsidy, $Y$ can be considered earnings after graduation, $X$ as individual characteristics such as family background, $D$ as levels of education, $W$ as proximity to colleges, and $Z$ as the tuition for attending public college \citep{kane1995labor}. While the government has no direct control over students' place of residence, policy-makers may change the tuition subsidy to encourage college enrollment.
\end{example}


The most common assumptions followed in the MTE literature are used in this study.

\begin{assumption}[Random Assignment]\label{assu:random_assignment}
Conditional on $X$, the instrumental variables $(W,Z)$ are independent of the unobserved variables $(U_1,U_0,U_D)$.
\end{assumption}

\begin{assumption}[Rank Condition] \label{assu:nontrivial}
	The propensity score $g(x,w,z)$ is a nontrivial function of $(w,z)$, given any $x$.
\end{assumption}

As shown by \cite{vytlacil2002independence}, the MTE model of subsidy characterized by Equations (\ref{eqn:outcome_equation}) and (\ref{eqn:treatment_data}), combined with Assumptions \ref{assu:random_assignment} and \ref{assu:nontrivial}, is equivalent to the \cite{imbens1994identification} assumptions of independence and monotonicity for the local average treatment effect (LATE) interpretation of IV estimands.\footnote{The equivalence result in \cite{vytlacil2002independence} is derived based on $X = x$. When the monotonicity is global across all values of $x$, \cite{chen2021global} showed that $g(x,w,z)$ must be additively separable between $x$ and $(w,z)$.}

\begin{assumption} [Moment Existence]\label{assu:moment_existence}
	The expectations $E[Y_1]$ and $E[Y_0]$ exist, that is, $E[Y_1] < \infty$ and $E[Y_0]<\infty$.
\end{assumption}
	
\begin{assumption}[Density Existence]\label{assu:density_existence}
	The distribution of $U_D$ is absolute continuous with respect to the Lebesgue measure for every $X=x$.
\end{assumption}

Following Assumption \ref{assu:density_existence}, without loss of generality, we can impose the normalization that $U_D  \mid X \sim \text{Unif}[0,1]$, as $g(X,W,Z)  \geq U_D$ is equivalent to $F_{U_D \mid X} (g(X,W,Z)) \geq F_{U_D \mid X}(U_D)$, where $F_{U_D \mid X}(u) \equiv \mathbb{P}(U_D \leq u \mid X)$. 


The \emph{marginal treatment effect} (MTE) is defined as 
\begin{align} \label{eqn:def-MTE}
	\text{MTE}(x,u) \equiv \mathbb{E} \left[ Y_1 - Y_0 \mid X = x, U_D = u \right].
\end{align}
MTE has  two common interpretations: one as an average treatment effect for individuals at different margins and the other as the infinitesimal \emph{local average treatment effect} (LATE) as it is identified by the local instrumental variable \citep{heckman2005structural}. MTE corresponds to the change in population outcome resulting from an infinitesimal change in the instrumental variable. The MTE curve can be used as a building block for other conventional treatment effect parameters, such as the average treatment effect. In our analysis, MTE also plays a fundamental role. Subsequently, we can also interpret MTE as the marginal effect of increasing the subsidy. In the first set of our results, we use MTE to characterize social welfare.

By definition, $\text{MTE}(x,u)$ is the mean treatment effect for individuals with $X = x$ at the selection margin $U_D = u$, where a higher $U_D$ implies a lower willingness to select the treatment. Fixing $X = x$, a decreasing MTE curve (along the $u$-dimension) corresponds to the case of positive selection, implying that individuals who benefit more from the treatment are more likely to select it. Positive (or negative) selection can often be motivated by economic theory, econometric specifications, and empirical findings. Typical examples include the choice of education in which individuals with higher returns are more likely to invest in education. While we do not impose the assumption of monotone selection in our main results, we analyze its implications on the characterization, identification, welfare properties, and estimation of optimal policies in the relevant sections throughout this study. 


\subsection{Subsidy rules and counterfactual outcomes}

We now describe the policy problem and define the subsidy rules. In our setup, the policy-maker can influence an individual's treatment choice and thereby the realized outcome by manipulating the subsidy $Z$. Formally, a policy $\pi$ is a measurable function 
\begin{align}\label{eqn:policy_defnition}
\pi: \text{Supp}(X,W) \rightarrow \mathcal{Z}^p
\end{align}
that maps individual characteristics to the action space $Z^p$.\footnote{Although the instrumental variable $W$ does not affect the potential outcomes, the assignment of subsidy can depend on $W$ as $W$ can affect the treatment take-up. Therefore, the optimal subsidy for type $X = x$ may depend on the value of $W$ as well. It is described further in Section \ref{sec:welfare_property}. Furthermore, we restrict our attention to deterministic policies.} The action space $\mathcal{Z}^p$ is the user-specified set of the current subsidy assignments. In some of our results, we assume that the action space is equal to $\mathcal{Z}$, the support of $Z$ in the data; while in others, we allow the action space to be larger than $\mathcal{Z}$. An example of $\mathcal{Z}^p$ would be an interval $\mathcal{Z}^p = [z_l,z_u] \subset \mathbb{R}$, where the range $[z_l,z_u]$ is specified by the policy-maker. We also allow the subsidy to be negative, which represents a tax imposed by the policy-maker. We denote the set of candidate policies as $\Pi$.

Rather than directly setting a mandatory treatment assignment for each individual, a subsidy rule aims at improving the welfare by encouraging individuals to select the treatment with subsidies. Given a policy $\pi$, we assume that the counterfactual treatment choice is
\begin{align} \label{eqn:treatment_policy}
	D^\pi = \mathbf{1}\{g(X,W,\pi(X,W)) \geq U_D\}
\end{align}
and the counterfactual outcome is
\begin{align} \label{eqn:outcome}
	Y^\pi = D^\pi Y_1 + (1-D^\pi)Y_0.
\end{align}
A comparison with the case with no policy intervention in Equations (\ref{eqn:outcome_equation}) and (\ref{eqn:treatment_data}) shows the difference that the variable $Z$ is replaced by the subsidy $\pi(X,W)$. Our definition of counterfactual outcomes implicitly assume the following form of \emph{policy-invariance}: (1) the structural functions $\mu_0,\mu_1,$, and $g$, and (2) the distribution of the economic fundamentals $(X,W, U_1,U_0,U_D)$ would remain the same under the policy intervention $\pi$.

We model the cost of subsidy as follows. Let $c(x,w,z,d)$ be the cost of assigning subsidy $\pi(x,w) = z$ to $(X = x,W = w)$ individuals, who then choose treatment status $D = d$. The counterfactual cost under policy $\pi$ is  
\begin{align*}
C^\pi = c(X,W,Z,D^\pi).
\end{align*}
Unlike studies that have adopted an intention-to-treat approach \citep[e.g.,][]{kitagawa2018should}, our framework allows the realized cost of subsidies to depend on individual's treatment choice $D$. Specifically, the cost can be endogenous to the individual's decision because, unlike the intention-to-treat approaches, the treatment choice $D$ is explicitly modeled. We assume that the cost function $c$ is known to the policy-maker although the realized cost for each individual is ex-ante unknown. 
The following examples present specific forms of the cost function.

\begin{example}[Constant Cost]\label{ex:constant_cost}
\cite{kitagawa2018should} studied the optimal eligibility rule for receiving subsidized training in the Job Training Partnership Act program. In their welfare calculation, they imputed the cost of the program as $\$774$ for each eligible individual, regardless of the actual take-up. In the notation of this study, their cost function is effectively $c(x,w,z,d) = c(z)$, which is only a function of the policy assignment. However, as reported in \cite{bloom1997benefits}, the program take-up varies substantially across different gender and age groups, implying that the realized cost is, in fact, heterogeneous.

\end{example} 

\begin{example}[Voucher Cost]\label{ex:voucher_cost}
Following Example \ref{ex:constant_cost}, a more realistic cost function would be 
\begin{align*}
	c(x,w,z,d) = z \cdot d,
\end{align*}
where $z$ is the amount of subsidy paid by the government. Similar to vouchers, which incur costs only when redeemed, the dependence on the treatment choice $d$ reflects that the subsidy is paid only when the individual attends the program. The heterogeneity of treatment take-up is already embedded in this cost function as both the treatment choice and the propensity score depend on the covariates, as specified in Equation (\ref{eqn:treatment_policy}).
\end{example}

Following studies on the treatment assignment \citep[][]{manski2009identification,kitagawa2018should,athey2021policy}, we adopt the additive welfare criterion to evaluate the performance of policies. Given the cost function $c$, the welfare $S(\pi)$ of a policy $\pi$ is defined as
\begin{align}\label{eqn:welfare_definition}
S(\pi) \equiv \mathbb{E}[Y^\pi] - \mathbb{E}[C^\pi].
\end{align}
The additive welfare criterion is flexible such that it can cover various social preferences by suitably transforming the outcome variable. For example, let $\nu$ be a concave function, we can accommodate inequality-averse preference by replacing $Y$ by $\nu(Y)$ \citep{atkinson1970measurement}.\footnote{By maximizing $\mathbb{E}[Y^\pi]$, the optimal policy necessarily maximizes $V(\mathbb{E}[Y^\pi])$ for any $V$ that is an increasing transformation. The invariance property helps in dealing with cases when welfare cannot be represented by simple averages of individual outcomes. For example, in the insecticide-treated bednet example, where the policy goal is to increase the usage of nets ($Y$), we can incorporate externality by choosing $V$ that maps the coverage rate $\mathbb{E}[Y^\pi]$ to the counterfactual infection rate $V(\mathbb{E}[Y^\pi])$. Such $V$ arguably exists if the externality is approximately determined by the average coverage of nets.} Alternatively, we can interpret the welfare function $\mathbb{E}[\nu(Y^\pi)] - \mathbb{E}[C^\pi] = \mathbb{E}[\nu(Y^\pi) - C^\pi]$ as the average social function in which individuals have quasi-linear preferences. 



Given the set of possible subsidy assignments $\mathcal{Z}^p$, the policy class $\Pi$, and the social welfare function $S(\pi)$, a subsidy rule $\pi^*\in\Pi$ is said to be optimal if it attains the social optimum, namely,
\[
S(\pi^*) \equiv \sup_{\pi\in\Pi} S(\pi).
\]
For the ease of expositions, we present our results with cost function $c(x,w,z,d) = z \cdot d$ in the main text. The results for general cost functions can be found in Appendix \ref{sec:proofs} and the proofs therein.

\begin{remark} The optimal policy may not be unique. For example, in a simple case where the treatment effect is always zero $(Y_1 = Y_0)$ with zero subsidy cost, any subsidy rule is optimal. Moreover, if any targeting variable is a continuous random one, new optimal rules can be developed by modifying the existing rules on a measure-zero set without changing the implied welfare.\footnote{Therefore, any characterization of optimal policies only apply outside a measure-zero set if at least one of the targeting variable is a continuous random one. }
\end{remark}

\section{Optimal Subsidy Rules} \label{sec:welfare_opt_policy}

This section presents the welfare characterization of subsidy rules using the MTE curve. We use the characterization results to derive optimality conditions of subsidy rules under different scenarios. Simple graphical illustrations are subsequently provided for welfare characterization and optimality conditions.

\subsection{Necessary conditions for optimality}
Our first result shows that the welfare of a policy can be expressed by MTE and the propensity score.

\begin{proposition} [Characterization of Welfare] \label{prop:welfare_representation}
	Under Assumptions \ref{assu:random_assignment} - \ref{assu:density_existence}, we have
	\begin{align} \label{eqn:welfare-characterization}
		\mathbb{E}[Y^\pi] = \mathbb{E}[Y_0] +  \mathbb{E} \left[ \int_0^{g(X,W,\pi(X,W))} \text{MTE}(X,u) du \right], 
		\end{align}
and
	\begin{align*}
		  \mathbb{E}[C^\pi] &= \mathbb{E}[\pi(X,W)g(X,W,\pi(X,W))] 
	\end{align*}
for the cost function $c(x,w,z,d) = z \cdot d$.\footnote{Results for the  general cost functions can be found in Appendix \ref{sec:proofs}.}
\end{proposition}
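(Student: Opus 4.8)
The plan is to establish both identities by the same two-step conditioning argument, starting from the pointwise decomposition $Y^\pi = Y_0 + D^\pi(Y_1 - Y_0)$ implied by Equation (\ref{eqn:outcome}). Taking expectations and using linearity reduces the outcome identity to computing $\mathbb{E}[D^\pi(Y_1 - Y_0)]$, since $\mathbb{E}[Y_0]$ is already the baseline term. Writing $D^\pi = \mathbf{1}\{g(X,W,\pi(X,W)) \geq U_D\}$ and abbreviating the counterfactual propensity score by $P \equiv g(X,W,\pi(X,W))$, which is a deterministic function of $(X,W)$, I would first condition on $(X,W,U_D)$ to peel the indicator off the treatment effect.

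The key step is to replace the inner conditional expectation by the MTE. Conditioning on $(X,W,U_D)$, the indicator $\mathbf{1}\{P \geq U_D\}$ is measurable, so $\mathbb{E}[D^\pi(Y_1-Y_0)] = \mathbb{E}\big[\mathbf{1}\{P \geq U_D\}\,\mathbb{E}[Y_1 - Y_0 \mid X, W, U_D]\big]$. Because $Y_1 - Y_0 = \mu_1(X,U_1) - \mu_0(X,U_0)$ depends on $(W,Z)$ only through $X$, Assumption \ref{assu:random_assignment} ensures that the conditional law of $(U_1,U_0)$ given $(X,W,U_D)$ coincides with its law given $(X,U_D)$; hence $\mathbb{E}[Y_1 - Y_0 \mid X, W, U_D] = \text{MTE}(X,U_D)$ by definition (\ref{eqn:def-MTE}). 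Conditioning next on $(X,W)$, under which $P$ is fixed and $U_D \sim \text{Unif}[0,1]$ (using $U_D \perp W \mid X$ together with the normalization licensed by Assumption \ref{assu:density_existence}), the inner expectation becomes $\int_0^1 \mathbf{1}\{P \geq u\}\,\text{MTE}(X,u)\,du = \int_0^{P} \text{MTE}(X,u)\,du$, which delivers Equation (\ref{eqn:welfare-characterization}) after taking the outer expectation.

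For the cost identity with $c(x,w,z,d) = z\cdot d$, the same conditioning applies but is shorter: the counterfactual cost is $C^\pi = \pi(X,W)\,D^\pi$, and conditioning on $(X,W)$ gives $\mathbb{E}[C^\pi] = \mathbb{E}\big[\pi(X,W)\,\mathbb{P}(P \geq U_D \mid X,W)\big]$. Since $U_D \mid X,W \sim \text{Unif}[0,1]$ and $P \in [0,1]$, the inner probability equals $P = g(X,W,\pi(X,W))$, yielding the stated expression.

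The main obstacle is the careful justification of the reduction $\mathbb{E}[Y_1 - Y_0 \mid X, W, U_D] = \text{MTE}(X,U_D)$: one must verify that conditioning on the extra information $(W,U_D)$ does not alter the relevant conditional distribution of $(U_1,U_0)$, which is exactly where Assumption \ref{assu:random_assignment} enters, and that the uniform normalization of $U_D \mid X$ is preserved after further conditioning on $W$. Two secondary points require attention: passing from $\mathbb{E}[\mathbf{1}\{P \ge U_D\}\,\text{MTE}(X,U_D)]$ to the iterated integral demands a Fubini/Tonelli argument whose integrability hypothesis is supplied by Assumption \ref{assu:moment_existence}; and one should confirm $P \in [0,1]$ so the upper limit of integration is admissible, which holds because $g$ is a propensity score.
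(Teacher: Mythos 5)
Your proof is correct and follows essentially the same route as the paper's: decompose $Y^\pi = Y_0 + D^\pi(Y_1-Y_0)$, condition on $(X,W,U_D)$ to pull out the indicator, invoke Assumption \ref{assu:random_assignment} to reduce the inner conditional expectation to $\text{MTE}(X,U_D)$, and then integrate against the uniform law of $U_D$ given $(X,W)$; the cost identity is handled by the same conditioning. Your added care about Fubini and the preservation of the uniform normalization after conditioning on $W$ is a slightly more explicit treatment of steps the paper takes for granted, but it is the same argument.
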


The upper bound of the integral of MTE in Equation (\ref{eqn:welfare-characterization}) clarifies the study results. This upper bound is the propensity score induced by the subsidy rule. Mathematically, the subsidy rule affects the welfare by manipulating the integration region of MTE. The optimality conditions in this section are derived by finding the upper bound that maximizes the net area between the MTE curve and the cost function. The welfare properties studied in Section \ref{sec:welfare_property} are also closely related to this upper bound.

 As shown in Proposition \ref{prop:welfare_representation}, the welfare of a subsidy rule has three parts: the baseline outcome for the untreated, the treatment effects on individuals who are induced into treatment status, and the expected costs of subsidies for the treatment takers. The baseline outcome $\mathbb{E}[Y_0]$ is irrelevant for welfare comparison between policies because it is unaffected by the policy.

Here, we examine the methods for finding the optimal policy.
The method is to maximize the welfare function point-wise for each combination of $(x,w)$. As implied by the choice Equation (\ref{eqn:treatment_policy}), if the policy-maker assigns subsidy $\pi(x,w)$ to $(X = x, W = w)$ individuals, the counterfactual take-up rate $u^\pi_{x,w}$ among these individuals would be given by 
\begin{align*}
u^\pi_{x,w} & \equiv P(D^\pi = 1 \mid X = x,W = w) \\
& = P(g(x,w,\pi(x,w) \geq U_D \mid X = x,W = w) \\
& = g(x,w,\pi(x,w)).
\end{align*}
When the propensity score $g(x,w,z)$ is monotonic in $z$, that is, an increase in subsidy always induces more individuals into the treatment, there is a one-to-one mapping between the amount of subsidy $\pi(x,w)$ and the counterfactual take-up rate $u^\pi(x,w)$. Specifically, the relationship is governed by the propensity score, and the problem can be simplified by changing the variables. We first solve the optimal take-up rate problem
\begin{align} \label{eqn:optim_prob}
	u^{\pi^*}_{x,w} \in \argmax_{u_{x,w} \in I_{x,w}} \text{ } \int_0^{u_{x,w}} \text{MTE}(x,u') du' - c(x,w,g_{x,w}^{-1}(u_{x,w}),1)\cdot u_{x,w}, 
\end{align}
where $I_{x,w} = \{ g(x,w,z): z \in \mathcal{Z}^p \}$ is the image of $g(x,w,\cdot)$, and $g_{x,w}(z) \equiv g(x,w,z)$. The set $I_{x,w}$ reflects to degree of the policy-maker's influence on treatment take-ups by manipulating the subsidy $Z$, and, $g_{x,w}^{-1}(u)$ is the amount of subsidy needed to induce a take-up rate of $u$. The integration in Equation (\ref{eqn:optim_prob}) starts from $0$ as individuals with low $U_D$ are always induced first.

To avoid technical issues, we assume $I_{x,w}$ is a closed set. We also assume that the propensity score $g_{x,w}(z)$ is strictly increasing in $z$ so that $g_{x,w}(z)$ is invertible. The monotonicity of the propensity score can be assumed in the context of subsidies. An increase in the amount of subsidies could always lead to more people participating in the treatment program. The following assumptions ensure that the optimization defined in Equation (\ref{eqn:optim_prob}) is well defined and has a solution.

\begin{assumption}[Continuity]\label{assu:continuity}
The propensity score $g(x,w,z)$ and the cost function $C(x,w,z)$ are continuous in the subsidy $z$,  and $\text{MTE}(x,u)$ is continuous in $u$. 
\end{assumption}

\begin{assumption}[Invertibility]\label{assu:invertibility}
The propensity score $g(x,w,z)$ is strictly increasing in $z$ for all $(x,w) \in \text{Supp}(X,W)$.
\end{assumption}

Although both variables $X$ and $W$ are used for targeting, they play different roles in the policy problem because $W$ is excluded from the outcome equation. Unlike $X$, which underlies the heterogeneity in the treatment effect, $W$ only enters the optimization problem through the feasible region $I_{x,w}$ and the propensity score $g(x,w,z)$. In particular, the only value of $W$ as a targeting variable comes from its effect on treatment take-up, which cannot be neglected when the policy provides incentives. In contrast, if the policy were to assign the treatment $D$ instead, targeting  variable $W$ is unnecessary as the treatment effect does not depend on $W$, as subsequently discussed.
 
After finding the optimal take-up rate $u^*_{x,w}$, we can compute the optimal subsidy level $\pi^*(x,w)$ that achieves $u^*_{x,w}$ such that
\begin{align} \label{eqn:second_step}
	g(x,w,\pi^*(x,w)) = u^*_{x,w}.
\end{align}
By construction, $u^*_{x,w}$ is always achieved by some subsidy level $\pi^*(x,w)\in\mathcal{Z}$ as $u^*_{x,w} \in I_{x,w}$. In fact, $\pi^*(x,w)$ is unique as the propensity score is strictly increasing in the amount of subsidy.

The next proposition summarizes the aforementioned arguments and states the necessary condition for a subsidy rule to be optimal.
\begin{proposition} [Optimality Condition] \label{prop:optimality_condition} Suppose that Assumptions \ref{assu:random_assignment} - \ref{assu:invertibility} hold, the cost function $c(x,w,z,d) = z \cdot d$, and the action space $\mathcal{Z}^p$ is a closed interval $[z_l,z_u]\subset\mathbb{R}$. If  $\pi^*$ is an optimal policy, then  $\pi^*$ either satisfies $\Lambda(x,w,\pi^*(x,w)) = 0$, where $\Lambda(x,w,z)$ is defined by
\begin{align}\label{eqn:opt_w_voucher}
	\Lambda(x,w,z) \equiv \text{MTE}(x,g(x,w,z)) -  z - g(x,w,z) \cdot \left[  \frac{\partial}{\partial z}g(x,w,z) \right]^{-1},
\end{align}
or satisfies $\pi^*(x,w) \in \{z_l, z_u\}$.
\end{proposition}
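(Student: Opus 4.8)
The plan is to reduce the global maximization over policies to a pointwise maximization over the subsidy level at each type $(x,w)$, and then apply a first-order argument. By Proposition \ref{prop:welfare_representation}, dropping the policy-invariant constant $\mathbb{E}[Y_0]$, maximizing $S(\pi)$ is equivalent to maximizing $\mathbb{E}[\phi_{X,W}(\pi(X,W))]$, where I define the pointwise objective
\begin{align*}
\phi_{x,w}(z) \equiv \int_0^{g(x,w,z)} \text{MTE}(x,u)\,du - z\,g(x,w,z).
\end{align*}
Since the policy class $\Pi$ is unrestricted, the assignment $\pi(x,w)$ may be chosen freely in $[z_l,z_u]$ for each type without any cross-type constraint, so the expectation is maximized by maximizing $\phi_{x,w}(z)$ separately for (almost) every $(x,w)$.

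First I would make this reduction rigorous. If $\pi^*$ is optimal but $\pi^*(x,w)$ failed to maximize $\phi_{x,w}$ over a set of types of positive measure, I would construct an improved policy $\tilde\pi$ that agrees with $\pi^*$ off this set and equals a maximizer of $\phi_{x,w}$ on it; this $\tilde\pi$ would attain strictly higher welfare, contradicting optimality of $\pi^*$. Existence of the maximizer follows from continuity of $\phi_{x,w}$ in $z$ (Assumption \ref{assu:continuity}) together with compactness of $[z_l,z_u]$, and measurability of $\tilde\pi$ follows from a measurable maximum theorem applied to the argmax correspondence. This establishes that, outside a measure-zero set, $\pi^*(x,w)\in\argmax_{z\in[z_l,z_u]}\phi_{x,w}(z)$, consistent with the Remark on measure-zero modifications.

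Next I would analyze each pointwise maximizer by a first-order condition. Differentiating $\phi_{x,w}$ in $z$, using the fundamental theorem of calculus on the integral term (valid since $\text{MTE}(x,\cdot)$ is continuous) and the product rule on $z\,g(x,w,z)$, and then factoring out $\frac{\partial}{\partial z}g(x,w,z)$, gives
\begin{align*}
\phi_{x,w}'(z) = \frac{\partial}{\partial z}g(x,w,z)\cdot\Lambda(x,w,z),
\end{align*}
where $\Lambda$ is exactly the expression in Equation \eqref{eqn:opt_w_voucher}. By Assumption \ref{assu:invertibility} the factor $\frac{\partial}{\partial z}g(x,w,z)$ is strictly positive, so at any interior maximizer $z^*=\pi^*(x,w)\in(z_l,z_u)$ the necessary condition $\phi_{x,w}'(z^*)=0$ is equivalent to $\Lambda(x,w,\pi^*(x,w))=0$. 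The only remaining possibility is that the maximizer lies on the boundary, i.e. $\pi^*(x,w)\in\{z_l,z_u\}$, which completes the dichotomy.

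The main obstacle I anticipate is the global-to-pointwise reduction: it requires care to produce a \emph{measurable} improving policy, hence the appeal to a measurable selection argument, and it forces the ``almost everywhere'' qualification already flagged in the Remark. A secondary technical point is that writing $\Lambda$ presupposes differentiability of $g$ in $z$; although Assumption \ref{assu:continuity} only states continuity, this differentiability is implicit in the very definition of $\Lambda$, and I would invoke it when applying the first-order condition.
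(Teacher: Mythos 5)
Your proposal is correct and follows essentially the same route as the paper: reduce to pointwise maximization of the welfare representation from Proposition \ref{prop:welfare_representation} at each $(x,w)$, then apply a first-order condition whose derivative factors as $\frac{\partial}{\partial z}g(x,w,z)\cdot\Lambda(x,w,z)$, yielding either $\Lambda=0$ or a corner solution. The paper phrases the intermediate step as a change of variables to the take-up rate $u$ and is less explicit about the measurable-selection justification and the implicit differentiability of $g$, but these are refinements of the same argument rather than a different approach.
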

The result for the general cost function $c$ is included in the proof of Proposition \ref{prop:optimality_condition} in Appendix \ref{sec:proofs}. The term $\Lambda(x,w,z)$ is the marginal benefit of subsidy, which is equal to the marginal revenue $\text{MTE}(x,g(x,w,z))$ minus the marginal cost $z + g(x,w,z) \cdot \left[  \frac{\partial}{\partial z}g(x,w,z) \right]^{-1}$, which arises naturally from a monopolist's profit maximization problem. The term $g(x,w,z) \cdot \left[  \frac{\partial}{\partial z}g(x,w,z) \right]^{-1}$ is the elasticity of treatment take-up. 

Equation (\ref{eqn:opt_w_voucher}) shows that MTE can be interpreted as the average marginal benefit of increasing the amount of subsidy for individuals with $(X = x, W=w)$. Therefore, MTE is not only a treatment effect parameter as commonly understood, but also a policy-relevant parameter in the context of personalized subsidy rule.\footnote{Studies have shown a similar connection between MTE and policy effects. For example, \cite{carneiro2010evaluating} showed that the marginal policy-relevant treatment effect is a weighted average of MTE. However, in our case, MTE is shown as the marginal effects of subsidies. The distinction appears as we study personalized subsidy rules, whereas studies have considered universal changes in the amount of subsidy.}

Following is a simple example to demonstrate the optimality condition presented in Proposition \ref{prop:optimality_condition}.

 \begin{example}\label{eg:toy_example}
	Suppose $X$ is a constant and hence can be omitted. Let the cost $c$ be zero. Let $W$ and $Z$ be supported on $[0,1]$. The treatment response is $g(w,z) = \frac{1}{4}(1+z+w)$. Let the MTE be $\text{MTE}(u) = 4-2u$, which is decreasing. The image of $g(w,\cdot)$ is $I_w = [\frac{1}{4}(1+w),\frac{1}{4}(2+w)]$. Here, the optimal policy is $\pi^*(w) = 1-\frac{3}{5}w$.
\end{example}

Example \ref{eg:toy_example} shows that although the instrument $W$ is excluded from the outcome equation, $W$ is still valuable for targeting because it affects the selection into treatment.\footnote{See Section \ref{sec:welfare_property} for more discussions on whether to target $W$ for different types of policies.} In the example, the optimal subsidy is decreasing in $w$ for two reasons. First, individuals with high $w$ are ex-ante more likely to select the treatment, therefore requiring less subsidy. Second, as there is positive selection into the treatment ($\text{MTE}$ is decreasing),  inducing high-resistance (high $u$) individuals into the treatment status is less consequential. 


\subsection{Sufficient conditions for optimality when MTE is monotone}
By definition, $\text{MTE}(x,u)$ is the mean treatment effect for individuals with $X = x$ at the selection margin $U_D = u$, where a higher $U_D$ implies a lower willingness to select the treatment. Fixing $X = x$, a decreasing MTE curve (along the $u$-dimension) corresponds to the case of positive selection, implying that individuals who benefit are more likely to take the treatment. 

\begin{assumption}[Positive Selection]\label{assu:positive_selection}
The selection process is said to be positive if MTE$(x,u)$ is weakly decreasing  in $u$.
\end{assumption}

\begin{assumption}[Negative Selection]\label{assu:negative_selection}
The selection process is said to be negative if MTE$(x,u)$ is weakly increasing  in $u$.
\end{assumption}

Empirical evidence supports the monotonicity of MTE. For example, in the context of return to schooling, \cite{carneiro2009estimating} used the local polynomial regression to obtain a nonparametric estimate of the MTE curve. In their study, figure 3 showed a clear downward-slopping MTE curve. Other empirical evidence includes \cite{carneiro2011estimating, cornelissen2018benefits}. The monotonicity can also be motivated using both economic theory and econometric specifications. Following are a few such examples.

\begin{example}[Normal Selection Model] Suppose $Y_1 = X'\beta_1+ U_1$, $Y_0 = X'\beta_0+ U_0$, and $D=\mathbf{1}\{Z'\theta\geq U_D\}$. Further, assume that $(U_1, U_0, U_D)$ is jointly normally distributed and independent of $(X,Z)$, and the variance of $U_D$ is normalized to one. Then $\text{MTE}(x,u) = x'(\beta_1-\beta_0) + (\sigma_{1D}-\sigma_{0D})\Phi^{-1}(u)$, where $\sigma_{1D} = Cov(U_1, U_D), \sigma_{0D}=Cov(U_0,U_D)$, and where $\Phi^{-1}$ is the inverse of the standard normal cumulative function. For extensions to non-normal selection models, see \cite{heckman2003simple}. 
\end{example}

\begin{example}[Roy Model] In the \cite{roy1951some} model, the treatment take-up is fully determined by the potential gain, given that $D = \mathbf{1}\{\Delta \geq 0\}$, where $\Delta \equiv Y_1 - Y_0 $. Let $U_D = F_{\Delta}(\Delta)\sim\text{Unif}[0,1]$ be the normalized gain.  Then $\text{MTE}(u)=\mathbb{E}[Y_1 - Y_0|U_D=u] = F_\Delta^{-1}(u)$. 
\end{example}

\begin{example}[Generalized Roy Model with Positive Selection] \label{eg:general-roy}
	 Consider a selection model in which the treatment take-up is partially determined by the potential gain in the form $D = \mathbf{1}\{\phi(X,W,Z,\Delta,V) \geq 0\}$, where $\Delta $ is the individual treatment effect defined in the previous example, and V represents the unobserved heterogeneity. In Appendix \ref{sec:primitive_monotone}, we show that if the function $\phi(X,W,Z,\Delta,V)$ is increasing in $\Delta$, then we can construct a function $g$ and a random variable $U_D\sim\text{Unif[0,1]}$ such that (1) $U_D \perp (W,Z) \mid X$, (2) $D = \mathbf{1}\{g(X,W,Z) \geq U_D\}$, and (3) $\text{MTE} (x,u) = \mathbb{E}[\Delta|X = x, U_D = u]$ is decreasing in $u$.
\end{example}

The following proposition characterizes the optimal policy when the selection is monotone.

\begin{proposition}[Optimality under Positive Selection] \label{prop:positive_seleciton} Suppose that Assumptions \ref{assu:random_assignment} - \ref{assu:positive_selection} hold, the action space $\mathcal{Z}^p$ = $[z_l,z_u]$, and $g(x,w,z)$ is weakly concave in $z$. Further assume that the cost function $c(x,w,z,d) = z \cdot d$. Then the optimal subsidy $\pi^*(x,w)$ is given by
\begin{equation} \label{eqn:positive_selection}
    \pi^*(x,w) = 
    \begin{cases*}
		z_{l}, & \text{if $\Lambda(x,w,z)  < 0$ for all $z \in [z_l,z_u]$}, \\
      z^*, & if $\Lambda(x,w,z^*) = 0$ \text{ for some } $z^*\in [z_l,z_u]$, \\
      z_u, & if $\Lambda(x,w,z)  > 0 \text{ for all } z \in [z_l,z_u]$, \\
    \end{cases*}
  \end{equation}
  where $\Lambda(x,w,z)$ is defined in (\ref{eqn:opt_w_voucher}).  
\end{proposition}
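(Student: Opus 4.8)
The plan is to reduce the welfare-maximization to the pointwise problem already set up in Equation (\ref{eqn:optim_prob}) and then use the two shape restrictions — weakly decreasing $\text{MTE}$ and weakly concave $g$ — to show that the pointwise objective is single-peaked in $z$, from which the three-case characterization drops out. First I would invoke Proposition \ref{prop:welfare_representation} to write, for the cost $c(x,w,z,d)=z\cdot d$,
\[
S(\pi) = \mathbb{E}[Y_0] + \mathbb{E}\bigl[\Psi(X,W,\pi(X,W))\bigr], \qquad \Psi(x,w,z) \equiv \int_0^{g(x,w,z)}\text{MTE}(x,u)\,du - z\,g(x,w,z).
\]
Since $\mathbb{E}[Y_0]$ is policy-invariant and $\pi(x,w)$ is free to choose for each $(x,w)$, maximizing $S$ amounts to maximizing $\Psi(x,w,\cdot)$ over $[z_l,z_u]$ for (almost) every $(x,w)$ — the standard measurable-selection interchange of $\sup$ and $\mathbb{E}$ — which is exactly Equation (\ref{eqn:optim_prob}) reparameterized back to $z$ through the strictly increasing invertible $g_{x,w}$ of Assumption \ref{assu:invertibility}.

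Next I would differentiate $\Psi$ in $z$. Writing $g' = \partial g/\partial z > 0$, the fundamental theorem of calculus and the product rule give
\[
\frac{\partial}{\partial z}\Psi(x,w,z) = g'\cdot\text{MTE}(x,g(x,w,z)) - g(x,w,z) - z\,g' = g'\cdot\Lambda(x,w,z),
\]
with $\Lambda$ as in Equation (\ref{eqn:opt_w_voucher}). Because $g'>0$, the sign of $\partial\Psi/\partial z$ coincides with the sign of $\Lambda$; in particular interior stationary points are exactly the zeros of $\Lambda$, consistent with Proposition \ref{prop:optimality_condition}.

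The heart of the argument is to show $\Lambda(x,w,\cdot)$ is weakly decreasing in $z$, which I would do term by term. The term $\text{MTE}(x,g(x,w,z))$ is the composition of the weakly decreasing $\text{MTE}$ (Assumption \ref{assu:positive_selection}) with the increasing $g$, hence weakly decreasing; the term $-z$ is strictly decreasing. For the last term $-g/g'$ I would avoid second derivatives, which Assumption \ref{assu:continuity} does not grant: weak concavity of $g$ makes $g'$ weakly decreasing and positive, so $1/g'$ is weakly increasing, and since $g$ is nonnegative (a propensity score) and increasing, the product $g\cdot(1/g')$ is weakly increasing, so $-g/g'$ is weakly decreasing. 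Summing the three pieces, $\Lambda$ is weakly decreasing in $z$, and it is continuous by Assumption \ref{assu:continuity}.

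Finally, a continuous weakly decreasing $\Lambda$ changes sign at most once, from positive to negative; since $\operatorname{sign}(\partial\Psi/\partial z)=\operatorname{sign}(\Lambda)$, the objective $\Psi(x,w,\cdot)$ is nondecreasing then nonincreasing — single-peaked — on $[z_l,z_u]$. Hence: if $\Lambda<0$ throughout, $\Psi$ is decreasing and the maximizer is the corner $z_l$; if $\Lambda>0$ throughout, the maximizer is $z_u$; otherwise the intermediate value theorem yields an interior $z^*$ with $\Lambda(x,w,z^*)=0$ that maximizes $\Psi$. Mapping each optimal $z$ back through Equation (\ref{eqn:second_step}) delivers $\pi^*(x,w)$. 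I expect the main obstacle to be this third step: proving monotonicity of $-g/g'$ cleanly without assuming $g$ is twice differentiable, and handling the weak inequalities — in particular the case where $\Lambda$ is flat at zero on an interval, where any point there is optimal, consistent with the earlier non-uniqueness remark.
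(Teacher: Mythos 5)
Your proof is correct and reaches the same conclusion by a slightly different mechanism than the paper. Both arguments reduce the problem to pointwise maximization via Proposition \ref{prop:welfare_representation} and the change of variables through the invertible propensity score; the difference is in how unimodality of the pointwise objective is established. The paper works in $u$-space, computes the \emph{second} derivative of the objective $\int_0^u \text{MTE}(x,u')\,du' - g_{x,w}^{-1}(u)\cdot u$, and shows it is non-positive (using the derivative of MTE, the inverse-function rule, and convexity of $g^{-1}$), concluding the objective is concave. You instead stay in $z$-space, factor the first derivative as $g'\cdot\Lambda$, and show $\Lambda$ is weakly decreasing term by term — in particular decomposing $-g/g'$ as minus the product of two nonnegative weakly increasing functions — which yields quasi-concavity (single-peakedness) rather than concavity, but that is all the three-case characterization needs. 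What your route buys is weaker smoothness: you never differentiate the MTE and never take a second derivative of $g$ (or of $g^{-1}$), which sits more comfortably with Assumption \ref{assu:continuity}, which only grants continuity of $\text{MTE}(x,\cdot)$ and of $g$ in $z$ (first differentiability of $g$ is already implicit in the definition of $\Lambda$). Your handling of the exhaustiveness of the three cases via the intermediate value theorem, and of the flat-at-zero degenerate case, is also sound; the only cosmetic slip is calling the $z^*$ in the middle case ``interior,'' whereas the proposition permits $z^*$ to be an endpoint.
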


Proposition \ref{prop:positive_seleciton} is a complete characterization of the optimal policy as one of the three cases in Equation (\ref{eqn:positive_selection}) must hold. When the selection is positive, the marginal return of subsidy decreases because individuals with higher returns are always induced first. Corner solutions arise if the marginal return is always positive or negative. We can also characterize the optimal policy for the case of negative selection, although under the assumption of no cost $c(x,w,z,d) = 0$.

\begin{proposition} [Optimality under Negative Selection]\label{prop:negative_seleciton} Suppose that Assumptions \ref{assu:moment_existence} - \ref{assu:invertibility} and \ref{assu:negative_selection} hold, the action space $\mathcal{Z}^p$ = $[z_l,z_u]$, and $\text{MTE}(x,u)$ is weakly increasing in $u$. Furthermore, assume $c(x,w,z,d) = 0$. Then, the optimal subsidy $\pi^*(x,w)$ is given by
\begin{equation}
    \pi^*(x,w) =
    \begin{cases}
	z_l, & \text{ if } \displaystyle \int^{g(x,w,z_u)}_{g(x,w,z_l)} \text{MTE}(x,u) du \leq 0, \\
	z_u, & \text{ otherwise.}
    \end{cases}
  \end{equation}
\end{proposition}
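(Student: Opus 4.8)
The plan is to build directly on the welfare characterization in Proposition~\ref{prop:welfare_representation}. Since the cost function is identically zero we have $\mathbb{E}[C^\pi]=0$, so the welfare reduces to $S(\pi) = \mathbb{E}[Y_0] + \mathbb{E}\left[\int_0^{g(X,W,\pi(X,W))}\text{MTE}(X,u)\,du\right]$, and the policy-invariant constant $\mathbb{E}[Y_0]$ can be dropped. Because the objective is an expectation over the distribution of $(X,W)$ and the candidate policy class imposes no constraint linking the subsidy assigned to one type $(x,w)$ to that assigned to another, I would maximize the integrand pointwise: for (almost) every $(x,w)$ it suffices to choose $\pi(x,w)\in[z_l,z_u]$ so as to maximize $\int_0^{g(x,w,\pi(x,w))}\text{MTE}(x,u)\,du$. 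Checking that the resulting pointwise maximizers assemble into an admissible measurable policy is the only bookkeeping needed here, and it follows because the optimal action is pinned down by the sign of a continuous functional of $(x,w)$.

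Next I would pass to the counterfactual take-up rate already used in Equation~\eqref{eqn:optim_prob}. Under Assumptions~\ref{assu:continuity} and~\ref{assu:invertibility}, $g(x,w,\cdot)$ is continuous and strictly increasing, so as $z$ sweeps the closed interval $[z_l,z_u]$ the induced take-up rate $u=g(x,w,z)$ sweeps the closed interval $I_{x,w}=[g(x,w,z_l),\,g(x,w,z_u)]$ in a one-to-one fashion. Writing $a=g(x,w,z_l)$ and $b=g(x,w,z_u)$, the pointwise problem becomes $\max_{u\in[a,b]} H(u)$ with $H(u)\equiv\int_0^u \text{MTE}(x,u')\,du'$. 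This reformulation is convenient precisely because it avoids any need to differentiate or invert $g$ explicitly; it uses only that the image is the stated interval.

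The key analytic step is to exploit negative selection. By Assumption~\ref{assu:negative_selection} the map $u\mapsto\text{MTE}(x,u)$ is weakly increasing, and by Assumption~\ref{assu:continuity} it is continuous, so $H$ is continuously differentiable with $H'=\text{MTE}(x,\cdot)$ weakly increasing; hence $H$ is convex on $[a,b]$. A convex function on a compact interval attains its maximum at an endpoint, so the optimal take-up rate is either $a$ or $b$, and I would compare the two candidate values through $H(b)-H(a)=\int_a^b \text{MTE}(x,u)\,du=\int_{g(x,w,z_l)}^{g(x,w,z_u)}\text{MTE}(x,u)\,du$. When this integral is $\le 0$ the maximum is at $u=a$, which by the one-to-one correspondence corresponds to $\pi^*(x,w)=z_l$; when it is strictly positive the maximum is at $u=b$, i.e.\ $\pi^*(x,w)=z_u$. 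This reproduces exactly the two cases in the statement.

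I expect the main obstacle to be conceptual rather than computational: making the pointwise reduction airtight. One must argue that unrestricted pointwise maximization of the integrand incurs no loss relative to maximizing $S(\pi)$ over the whole policy class, and that the induced selection is measurable so that it defines a genuine $\pi^*\in\Pi$. Everything downstream—the change of variables, the convexity of $H$, and the endpoint comparison—is routine once negative selection is invoked. A minor point worth flagging is the tie case: when the integral equals zero both endpoints maximize $H$, consistent with the earlier remark that optimal policies need not be unique, and the statement simply resolves the tie in favor of $z_l$.
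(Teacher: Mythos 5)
Your proposal is correct and follows essentially the same route as the paper's proof: reduce to the pointwise take-up-rate problem, note that the objective $u\mapsto\int_0^u \text{MTE}(x,u')\,du'$ is convex because its derivative is the weakly increasing MTE, conclude the maximum lies at an endpoint of $[g(x,w,z_l),g(x,w,z_u)]$, and decide between the endpoints by the sign of $\int_{g(x,w,z_l)}^{g(x,w,z_u)}\text{MTE}(x,u)\,du$. The extra care you take over measurability and the tie case is fine but not something the paper dwells on.
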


In the case of negative selection, individuals who least benefit from the treatment are always induced first, and the marginal return of subsidy is increasing. Therefore, unless the treatment effect is zero, the optimal subsidy is always a corner solution-it either assigns the highest subsidy under consideration so that individuals with high returns are persuaded to take up the treatment; or it assigns the least amount of subsidy to minimize the potential harm caused by the treatment for low-return individuals.

\subsection{Graphical Illustration} \label{subsec:graph}

The results presented in this section are illustrated using graphs. For simplicity, we make the following assumptions: the cost $c = 0$, the potential outcome $Y_0 = 0$, and $X$ and $W$ are constants and hence can be omitted in the discussion. Consequently, the subsidy rule $\pi$ becomes a scalar constant. Under these assumptions, the welfare characterization in Proposition \ref{prop:welfare_representation} can be simplified to an integral of MTE from $0$ to $g(\pi)$:
\begin{align*}
	S(\pi) = \mathbb{E}[Y^\pi] - \mathbb{E}[C^\pi] = \int_0^{g(\pi)} \text{MTE}(u) du.
\end{align*}
Figure \ref{fig:welfare-characterization} demonstrates the welfare characterization in Proposition \ref{prop:welfare_representation} and the optimality condition in Proposition \ref{prop:optimality_condition}. Figure \ref{fig:monotone-MTE} demonstrates the optimal subsidy rule under positive selection.

\begin{figure}[!htbp]
	\begin{center}
		\begin{subfigure}[]{0.5\textwidth}
			\begin{tikzpicture}
	
				\begin{axis}[
					legend style={nodes={scale=0.8, transform shape}},
					axis y line=middle, 
					axis x line=middle,
					y axis line style={opacity=0},
					ytick={0},
					xtick={0,1},
				]
				\addplot [
					thick,
					name path=mte,
					domain=0:1, 
					samples=100, 
					color=black,
				]
				{0.399691 +   4.956392 * x  -74.935120*x^2 + 275.423206*x^3 -426.853331*x^4+  298.923460*x^5  -78.211517*x^6};
				\path[name path=axis] (axis cs:0,0) -- (axis cs:1,0);
				\addplot [
					thick,
					color=blue,
					fill=blue, 
					fill opacity=0.5
				]
				fill between[
					of=mte and axis,
					soft clip={domain=0:0.2},
				];
				\addplot [
					thick,
					color=red,
					fill=red, 
					fill opacity=0.5
				]
				fill between[
					of=mte and axis,
					soft clip={domain=0.2:0.4},
				];
			
				\addplot[black, mark=diamond*] coordinates{(0.4,0)} node[above, color=black] {$g(\pi_1)$};
			
				after end axis/.code={
					\path (axis cs:0,0) node [anchor=north west,yshift=-0.075cm,xshift=-0.075cm] {0};
					
				}
				
				\end{axis}
				
				\end{tikzpicture}
				\caption{Welfare of subsidy $\pi_1$}
		\end{subfigure}%
		\begin{subfigure}[]{0.5\textwidth}
			\begin{tikzpicture}
	
				\begin{axis}[
					legend style={nodes={scale=0.8, transform shape}},
					axis y line=middle, 
					axis x line=middle,
					y axis line style={opacity=0},
					ytick={0},
					xtick={0,1},
				]
				\addplot [
					thick,
					name path=mte,
					domain=0:1, 
					samples=100, 
					color=black,
				]
				{0.399691 +   4.956392 * x  -74.935120*x^2 + 275.423206*x^3 -426.853331*x^4+  298.923460*x^5  -78.211517*x^6};
				\addlegendentry{MTE Curve};
				\path[name path=axis] (axis cs:0,0) -- (axis cs:1,0);
				\addplot [
					thick,
					color=blue,
					fill=blue, 
					fill opacity=0.5
				]
				fill between[
					of=mte and axis,
					soft clip={domain=0:0.2},
				];
				\addlegendentry{Positive Welfare};
				\addplot [
					thick,
					color=red,
					fill=red, 
					fill opacity=0.5
				]
				fill between[
					of=mte and axis,
					soft clip={domain=0.2:0.5},
				];
				\addlegendentry{Negative Welfare};
				\addplot [
					thick,
					color=blue,
					fill=blue, 
					fill opacity=0.5
				]
				fill between[
					of=mte and axis,
					soft clip={domain=0.5:0.7},
				];
				\addplot[black, mark=diamond*] coordinates{(0.7,0)} node[below, color=black] {$g(\pi_2)$};
			
				after end axis/.code={
					\path (axis cs:0,0) node [anchor=north west,yshift=-0.075cm,xshift=-0.075cm] {0};
					
				}
				
				\end{axis}
				
				\end{tikzpicture}
				\caption{Welfare of subsidy $\pi_2$}
		\end{subfigure}

	\end{center}
	\caption{Welfare Characterization}
	\label{fig:welfare-characterization}
	\caption*{\footnotesize The two graphs demonstrate the welfare under two arbitrary subsidy rules, $\pi_1$ and $\pi_2$. In each case, the welfare is an integral of MTE from $0$ to $g(\pi)$, which is equal to the area of the blue region minus that of the red region. To minimize the area of the red region, we need $\text{MTE}(g(\pi^*)) = 0$, which is the necessary condition in Proposition \ref{prop:optimality_condition}.}
\end{figure}

\begin{figure}[!htbp]
	\begin{center}
		\begin{subfigure}[]{.5\textwidth}
			\begin{tikzpicture}
	
				\begin{axis}[
					legend style={nodes={scale=0.8, transform shape}},
					axis y line=middle, 
					axis x line=middle,
					y axis line style={opacity=0},
					ytick={0},
					xtick={0,1},
				]
				\addplot [
					thick,
					name path=mte,
					domain=0:1, 
					samples=100, 
					color=black,
				]
				{cos(deg(x)*3.1415)+0.25};
				\path[name path=axis] (axis cs:0,0) -- (axis cs:1,0);
				\addplot [
					thick,
					color=blue,
					fill=blue, 
					fill opacity=0.5
				]
				fill between[
					of=mte and axis,
					soft clip={domain=0:0.58},
				];
		
				\addplot[black, mark=diamond*] coordinates{(0.58,0)} node[above right, color=black] {$g(\pi^*)$};
			
				after end axis/.code={
					\path (axis cs:0,0) node [anchor=north west,yshift=-0.075cm,xshift=-0.075cm] {0};
					
				}
				
				\end{axis}
				
				\end{tikzpicture} 
				\caption{Optimal Subsidy}
		\end{subfigure}%
		\begin{subfigure}[]{.5\textwidth}
			\begin{tikzpicture}
	
				\begin{axis}[
					legend style={nodes={scale=0.8, transform shape}},
					axis y line=middle, 
					axis x line=middle,
					y axis line style={opacity=0},
					ytick={0},
					xtick={0,1},
				]
				\addplot [
					thick,
					name path=mte,
					domain=0:1, 
					samples=100, 
					color=black,
				]
				{cos(deg(x)*3.1415) + 0.25};
				\addlegendentry{MTE Curve};
				\path[name path=axis] (axis cs:0,0) -- (axis cs:1,0);
				\addplot [
					thick,
					color=blue,
					fill=blue, 
					fill opacity=0.5
				]
				fill between[
					of=mte and axis,
					soft clip={domain=0:0.58},
				];
				\addlegendentry{Positive Welfare};
				\addplot [
					thick,
					color=red,
					fill=red, 
					fill opacity=0.5
				]
				fill between[
					of=mte and axis,
					soft clip={domain=0.58:0.7},
				];
				\addlegendentry{Negative Welfare};
				
				\addplot[black, mark=diamond*] coordinates{(0.7,0)} node[above, color=black] {$g(\pi)$};
			
				after end axis/.code={
					\path (axis cs:0,0) node [anchor=north west,yshift=-0.075cm,xshift=-0.075cm] {0};
					
				}
				
				\end{axis}
				
			\end{tikzpicture}
			\caption{Non-optimal Subsidy}
		\end{subfigure}
		
	\end{center}
	\caption{Welfare of Subsidy Rules under Positive Selection}
	\label{fig:monotone-MTE}
	\caption*{\footnotesize Graph (a) shows the optimal subsidy when the MTE is monotonically decreasing. The necessary condition $\text{MTE}(g(\pi^*)) = 0$ is also sufficient because the MTE curve has a unique zero. Graph (b) shows that if we move $g(\pi)$ to the right, there will be a red region, and hence the welfare decreases.}
\end{figure}

\section{Welfare Properties of Subsidy Rules}\label{sec:welfare_property}

In most studies, instrumental variables are typically used to identify the treatment effects. In this section, we argue that the instrumental variable has a more fundamental influence on the policy design problem through its function of providing incentives for the treatment take-up. We show that assigning subsidies weakly dominates assigning treatments directly and can achieve the first-best welfare when the MTE is decreasing. 

To elaborate, we condition our analysis on $X = x$ throughout this section, implying that $g$ and $\pi$ are only functions of the instrumental variables $W$ and $Z$, and MTE is only a function of $u$. For simplicity, we assume that the action space $\mathcal{Z}^p$ is equal to the support of $Z$.

\subsection{Subsidies better than mandate}

To compare welfare, we introduce \emph{direct policies}, a new class of policies, that differ from the subsidy rules. The direct policies do not manipulate the subsidy $Z$. Instead, they directly manipulate the treatment take-up. 
Mathematically, a direct policy is a function $\tau : \text{Supp}(W,Z) \rightarrow \{0,1\}$. For an individual with characteristics $(w,z)$, if $\tau(w,z) = 1$, then the policy-maker makes the treatment mandatory. If $\tau(w,z) = 0$, the individual cannot select the treatment.\footnote{The result in this section can be generalized to allow for the randomization of direct policies. That is, the range of $\tau$ can be convexified to $[0,1]$. For simplicity, we do not consider this convexification in the propositions.} Denote the counterfactual outcome under the direct policy $\tau$ by 
\begin{align*}
	Y^\tau \equiv \tau(W,Z)Y_1 + (1-\tau(W,Z)) Y_0 .
\end{align*}

We study the following optimal \textit{identified} welfares under two policy settings:\footnote{In this section, we omit the cost part of the welfare as it is ambiguous to compare the costs of treatment and subsidy without a specific context.} 
\begin{align*}
	S^*_{{\text{sub}}} & \equiv \sup_{\pi: \text{Supp}(W,Z) \rightarrow \mathcal{Z}}  \mathbb{E} [Y^\pi \mathbf{1}\{U_D \in \text{Supp}(g(W,Z)) \}], \\
	S^*_{\text{dir}} & \equiv \sup_{\tau: \text{Supp}(W,Z) \rightarrow \{0,1\}} \mathbb{E} [ Y^\tau\mathbf{1}\{U_D \in \text{Supp}(g(W,Z)) \}].
\end{align*}
The subscript ``sub'' represents ``subsidy,'' and $S^*_{{\text{sub}}}$ is the identified welfare under subsidy rules. The subscript ``dir'' represents ``direct,'' and $S^*_{\text{dir}}$ is the optimal welfare under direct policies. We restrict the comparison of optimal welfare on the set of individuals whose $U_D$ lies in the region $\text{Supp}(g(W,Z))$, on which the treatment effect can be identified. The instrumental variables do not affect the treatment choice of individuals with $U_D$ outside the region $\text{Supp}(g(W,Z))$.\footnote{These individuals are referred to as always-takers or never-takers in the LATE literature.} We exclude these individuals in welfare comparisons because the data are inherently uninformative on the treatment effects and the counterfactual welfare under different policies for these individuals. The next two propositions provide a ranking between the two optimal welfares $S^*_{{\text{sub}}}$ and $S^*_{{\text{dir}}}$.

\begin{proposition} [Subsidies Better Than Direct Policies] \label{prop:value_instrument}
	Suppose that Assumptions \ref{assu:random_assignment}-\ref{assu:density_existence} hold. Then, $S^*_{{\text{sub}}} \geq S^*_{\text{dir}}$. 
	 The inequality holds strictly if the supremum in the definition of $S^*_{{\text{sub}}}$ is achieved through a unique policy $\pi^*$, such that $g(W,\pi^*(W))$ lies in the interior of $\text{Supp}(g(W,Z))$ with positive probability.
\end{proposition}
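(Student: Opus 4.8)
The plan is to reduce both optimal welfares to policy-dependent integrals of the MTE over the identified (complier) region, show that the optimal direct policy must be a ``bang-bang'' rule that either treats or withholds treatment from the entire identified population, and then observe that a subsidy rule can replicate these two corner solutions while optimizing over a strictly richer class of threshold rules.

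First I would apply the welfare characterization in Proposition \ref{prop:welfare_representation}, specialized to $X=x$ and to the restricted population $\{U_D\in\text{Supp}(g(W,Z))\}$. Writing $Y^\pi=\mathbf 1\{U_D\le g(W,\pi(W))\}Y_1+\mathbf 1\{U_D> g(W,\pi(W))\}Y_0$ and taking conditional expectations given $(W,U_D)$, Assumption \ref{assu:random_assignment} lets me replace $\mathbb E[Y_1-Y_0\mid W,U_D=u]$ by $\text{MTE}(u)$, yielding
\begin{equation*}
\mathbb E\!\left[Y^\pi\mathbf 1\{U_D\in\mathcal U\}\right]=\mathbb E\!\left[Y_0\mathbf 1\{U_D\in\mathcal U\}\right]+\mathbb E\!\left[\int_{\mathcal U\cap[0,\,g(W,\pi(W))]}\text{MTE}(u)\,du\right],
\end{equation*}
where $\mathcal U$ is the identified region. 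The first term is policy-invariant and hence common to both problems, so the comparison collapses to the policy-dependent MTE integrals.

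Next I would characterize $S^*_{\text{dir}}$. The decisive point is that, under Assumption \ref{assu:random_assignment}, the treatment indicator $\tau(W,Z)$ is independent of $U_D$, so a direct policy cannot discriminate across selection margins: conditioning on $(W,Z)$, the expected gain from setting $\tau=1$ is the constant $\int_{\mathcal U}\text{MTE}(u)\,du$. Since the objective is linear in $\tau$ with a coefficient of fixed sign, its pointwise maximum is a corner rule, and the policy-dependent part of $S^*_{\text{dir}}$ equals $\max\!\left(0,\int_{\mathcal U}\text{MTE}(u)\,du\right)$ — treat the whole identified population when the average gain is positive, and none of it otherwise. A subsidy rule, by contrast, chooses for each $w$ a threshold $t=g(w,\pi(w))$ and collects $\int_{\mathcal U\cap[0,t]}\text{MTE}$; letting the subsidy attain (or approach) the smallest and largest feasible take-up levels reproduces exactly the two corner integrals $0$ and $\int_{\mathcal U}\text{MTE}$. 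Because the subsidy optimizes over the full range of feasible thresholds, a superset of these two corners, its value is at least $\max\!\left(0,\int_{\mathcal U}\text{MTE}\right)$, and adding back the common baseline gives $S^*_{\text{sub}}\ge S^*_{\text{dir}}$.

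The hard part will be the reachability step: one must verify that the subsidy's feasible thresholds $\{g(w,z):z\in\mathcal Z\}$ do span $\text{Supp}(g(W,Z))$, so that the corner solutions used by the optimal direct policy are genuinely replicable and lie inside the subsidy's feasible set — this requires careful bookkeeping of the support of $g(W,Z)$ and of the fact that extreme subsidies induce the extreme take-up regions (and, since only Assumptions \ref{assu:random_assignment}--\ref{assu:density_existence} are imposed, approximating rather than attaining these extremes). For the strict inequality I would invoke Proposition \ref{prop:optimality_condition}: if the supremum is attained by a unique $\pi^*$ with $g(W,\pi^*(W))$ interior to $\text{Supp}(g(W,Z))$ on a positive-probability set, then on that set the threshold integral $\int_{\mathcal U\cap[0,t]}\text{MTE}$ is maximized strictly inside the feasible range and hence strictly above both corner values; uniqueness of $\pi^*$ rules out any corner rule also attaining the supremum, so integrating the strict pointwise gain over the positive-probability set delivers $S^*_{\text{sub}}>S^*_{\text{dir}}$.
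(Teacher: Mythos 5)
Your proposal is correct and follows essentially the same route as the paper's proof: both reduce $S^*_{\text{sub}}$ and $S^*_{\text{dir}}$ to integrals of the MTE over the identified region, observe that the direct-policy optimum is the bang-bang value $\max\bigl(0,\int \text{MTE}\bigr)$ because $\tau(W,Z)\perp U_D$ makes the objective linear in $\tau$, and note that these two corners are replicated by extreme subsidy levels, with the strict inequality following from uniqueness of an interior optimizer. The reachability point you flag is indeed the step the paper handles only by assertion ("we can choose $\pi$ such that $\text{Supp}(g(W,Z))\subset B_\pi$"), so your bookkeeping concern is well placed but does not constitute a divergence in approach.
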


Proposition \ref{prop:value_instrument} states that the optimal welfare under subsidy rules is always preferred to that under direct policies. Subsidy-based policy weakly dominates the direct policy because the former affects the treatment status through changes in the treatment selection: all else being equal, a small subsidy incentivizes only individuals with low $U_D$ while larger subsidy incentivizes both low- and high-$U_D$ into the treatment status. Intuitively, the subsidy-based policy uses $U_D$ as a targeting variable although $U_D$ is unobservable. This capability to implicitly target with $U_D$ will enhance the welfare because $U_D$ correlates with $(U_1, U_0)$ and thus the potential outcomes.

Proposition \ref{prop:value_instrument} can be mathematically explained. Recall that from the discussion succeeding Proposition \ref{prop:welfare_representation}, the welfare of a subsidy rule is essentially the net area between the MTE and the cost function from zero to a nontrivial upper bound determined by the subsidy rule. As shown in the proof and in Theorem 1 of \cite{sasaki2020welfare}, we represent the welfare of a direct policy as an integral of MTE, but the integration region is the unit interval $[0,1]$.\footnote{The cost is not explicitly modeled in \cite{sasaki2020welfare}. Therefore, the welfare representation is the net area between the MTE and the horizontal axis.} Mathematically, direct policies do not have control over the integration region and therefore not as flexible as subsidy rules. We graphically demonstrate this argument at the end of this section.

We focus on the welfare implications of targeting the instruments $(W,Z)$ in direct policies. The Example \ref{eg:toy_example} in Section \ref{sec:welfare_opt_policy} shows that targeting $W$ is useful in designing subsidy rules. However, when considering direct policies, targeting $W$ or $Z$ does not improve the welfare because $W$ and $Z$ are (conditionally) independent with $(U_1,U_0,U_D)$ and are excluded from the outcome equation. After the treatment probability is assigned, the variation in the instruments is irrelevant to welfare. To formally state this result, we introduce a subclass of direct policies that have constant treatment probability. We call a direct policy $\tau$ a \emph{constant policy} if $\tau(w,z)$ does not vary with $(w,z)$. The optimal welfare for constant policies is denoted by 
\begin{align*}
	S^*_{\text{con}} & \equiv \sup_{\text{constant policy } \tau} \mathbb{E} [Y^\tau \mathbf{1}\{U_D \in \text{Supp}(g(W,Z)) \}].
\end{align*}

\begin{proposition} [Irrelevance of Instruments in Direct Policies] \label{prop:irrelevance_instrument}
	Suppose that Assumptions \ref{assu:random_assignment}-\ref{assu:density_existence} hold. Then, $S^*_{\text{dir}} = S^*_{\text{con}}$. 
\end{proposition}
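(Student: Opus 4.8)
The plan is to show that the objective defining $S^*_{\text{dir}}$ depends on a direct policy $\tau$ only through the scalar $p \equiv \mathbb{E}[\tau(W,Z)]$ and is \emph{affine} in $p$, so that its supremum is attained at $p \in \{0,1\}$ --- precisely the values delivered by the constant policies $\tau \equiv 0$ and $\tau \equiv 1$. Since every constant policy is a direct policy, the inequality $S^*_{\text{con}} \le S^*_{\text{dir}}$ is immediate, and the substance of the proposition is the reverse direction.

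First I would fix $X = x$ as in the rest of the section and abbreviate $A \equiv \text{Supp}(g(W,Z))$, the (conditional) support of the propensity score. The key structural observation is that $A$ is a \emph{deterministic} subset of $[0,1]$, so the trimming indicator $\mathbf{1}\{U_D \in A\}$ is a function of $U_D$ alone and carries no dependence on the policy. Writing $Y^\tau = \tau(W,Z) Y_1 + (1-\tau(W,Z)) Y_0$ and recalling the outcome equations $Y_1 = \mu_1(x,U_1)$, $Y_0 = \mu_0(x,U_0)$, the integrand $Y^\tau \mathbf{1}\{U_D \in A\}$ splits into a factor governed by the instruments, namely $\tau(W,Z)$, and a factor governed by the unobservables, namely $(Y_1,Y_0,U_D)$.

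Next I would invoke Assumption \ref{assu:random_assignment}: conditional on $X=x$, the instruments $(W,Z)$ are independent of $(U_1,U_0,U_D)$, so $\tau(W,Z)$ is independent of the pair $\big(Y_1 \mathbf{1}\{U_D \in A\},\, Y_0 \mathbf{1}\{U_D \in A\}\big)$. Factorizing each term of the expectation (integrability being guaranteed by Assumption \ref{assu:moment_existence}) gives
\[
\mathbb{E}\big[Y^\tau \mathbf{1}\{U_D \in A\}\big] = p\,\mathbb{E}\big[Y_1 \mathbf{1}\{U_D \in A\}\big] + (1-p)\,\mathbb{E}\big[Y_0 \mathbf{1}\{U_D \in A\}\big], \qquad p = \mathbb{E}[\tau(W,Z)].
\]
Because this value depends on $\tau$ only through $p$ and is affine in $p$, and because both endpoints $p=0$ and $p=1$ are attainable (by $\tau \equiv 0$ and $\tau \equiv 1$, which are constant policies), the supremum over all direct policies equals the larger of the objective at $p=0$ and $p=1$. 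Every direct policy is thus weakly dominated by one of these two constant policies, yielding $S^*_{\text{dir}} \le S^*_{\text{con}}$ and hence equality.

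I expect the only genuine obstacle to be the factorization step: one must notice that $\text{Supp}(g(W,Z))$ is a fixed set, so that $\mathbf{1}\{U_D \in A\}$ is measurable with respect to $U_D$ alone, and one must use \emph{both} random assignment and the exclusion of $(W,Z)$ from the outcome equations to separate the instrument factor from the unobservable factor. Once this factorization is in hand, the affine/endpoint argument is entirely routine.
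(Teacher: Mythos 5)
Your proof is correct and follows essentially the same route as the paper: both arguments use Assumption \ref{assu:random_assignment} to factor the welfare of a direct policy into $\mathbb{E}[\tau(W,Z)]$ times policy-independent terms, then note that the resulting expression is affine in $p=\mathbb{E}[\tau(W,Z)]\in[0,1]$ and hence maximized at an endpoint attained by a constant policy. The only cosmetic difference is that the paper normalizes $Y_0=0$ and writes the unobservable factor as an integral of the MTE, whereas you keep both potential-outcome terms explicitly.
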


\subsection{Subsidies achieve first-best welfare}

Although subsidy rules have the power to implicitly target on unobserved heterogeneity, targeting with subsidies is not optimal if the policy-maker could observe $U_D$ because the subsidy-based policy only targets individuals in a \emph{second-best} sense. The targeting is restricted to a specific form that individuals with low $U_D$ have to be in the treatment status whenever the high $U_D$ individuals are. However, we next show that in the specific case of positive selection, the optimal subsidy rule achieves the \emph{first-best} welfare. That is, the policy-maker cannot further improve the welfare from the optimal subsidy rule even if $U_D$ is observed.

To explain the definition of first-best, we should consider a situation where all the characteristics of an individual are observable, that is, the policy-maker has the power to assign the propensity of treatment based on $(W,Z,U_D)$. We want to keep in mind that targeting on $U_D$ is not feasible in practice. We are only considering such infeasible policies for welfare comparisons. Mathematically, an \emph{infeasible policy} is a function $\tilde{\tau}: \text{Supp}(W,Z) \times [0,1] \rightarrow \{0,1\}$. The policy-maker uses the information about $(W,Z,U_D)$ to mandate the treatment choice for each individual. Denote the counterfactual outcome under the direct policy $\tau$ by 
\begin{align*}
	Y^{\tilde{\tau}} \equiv \tilde{\tau}(W,Z,U_D)Y_1 + (1-\tilde{\tau}(W,Z,U_D))Y_0 .
\end{align*}
The first-best welfare is the optimal welfare with respect to the infeasible policies:
\begin{align*}
	S^*_{\text{fb}} & =  \sup_{\tilde{\tau}: \text{Supp}(W,Z) \times [0,1] \rightarrow \{0,1\}}   \mathbb{E} [Y^{\tilde{\tau}} \mathbf{1}\{U_D \in \text{Supp}(g(W,Z))],
\end{align*}
where the subscript ``fb'' represents ``first-best.'' The following proposition states that subsidy-based policies can achieve the first-best welfare.

\begin{proposition} [Subsidies can be First-best under Positive Selection] \label{prop:first_best}
	Suppose that Assumptions \ref{assu:random_assignment}-\ref{assu:density_existence} and \ref{assu:positive_selection} hold. Then, $S^*_{{\text{sub}}} = S^*_{\text{fb}}$. 
\end{proposition}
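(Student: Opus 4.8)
The plan is to show the two inequalities $S^*_{\text{sub}} \le S^*_{\text{fb}}$ and $S^*_{\text{sub}} \ge S^*_{\text{fb}}$. The first direction is immediate: every subsidy rule $\pi$ induces a treatment assignment $D^\pi = \mathbf{1}\{g(W,\pi(W)) \ge U_D\}$, which can be replicated by an infeasible policy $\tilde\tau(w,z,u) = \mathbf{1}\{g(w,\pi(w)) \ge u\}$, since the infeasible class is allowed to condition on $(W,Z,U_D)$. Hence the feasible subsidy class is (welfare-equivalently) a subset of the infeasible class, giving $S^*_{\text{sub}} \le S^*_{\text{fb}}$. The real content is the reverse inequality, which says positive selection lets subsidies recover everything the infeasible planner could achieve.

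For the reverse direction, I would first characterize the first-best infeasible policy pointwise. Because the infeasible planner observes $U_D$ (and because $(W,Z)\perp U_D$ and the outcome does not depend on $(W,Z)$ given $U_D$), the welfare $\mathbb{E}[Y^{\tilde\tau}\mathbf{1}\{U_D\in\mathrm{Supp}(g(W,Z))\}]$ decomposes as an integral over $u$ of a pointwise gain, and the optimal $\tilde\tau$ simply treats exactly those margins $u$ with $\mathrm{MTE}(u) = \mathbb{E}[Y_1 - Y_0 \mid U_D = u] \ge 0$. Thus $S^*_{\text{fb}}$ equals the integral of $\mathrm{MTE}(u)$ over the set $\{u \in \mathrm{Supp}(g(W,Z)) : \mathrm{MTE}(u) \ge 0\}$.

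The key step is then to invoke positive selection, i.e.\ Assumption \ref{assu:positive_selection} that $\mathrm{MTE}(u)$ is weakly decreasing. Monotonicity makes the set $\{u : \mathrm{MTE}(u) \ge 0\}$ an interval of the form $[0, u^\dagger]$ where $u^\dagger$ is the zero of the MTE curve (or the relevant endpoint of the support). A subsidy rule, by Equation (\ref{eqn:welfare-characterization}) and the discussion after Proposition \ref{prop:welfare_representation}, integrates the MTE precisely over $[0, g(W,\pi(W))]$ — that is, it can only select initial segments of the $u$-axis. Under positive selection the first-best target set is itself such an initial segment, so by the welfare-characterization result (and the invertibility of $g$ in Assumption \ref{assu:invertibility}, or by a limiting/supremum argument when $u^\dagger$ lies only in the closure of the attainable propensity scores) there exists a subsidy rule whose induced propensity score equals $u^\dagger$ on $\mathrm{Supp}(g(W,Z))$, matching $S^*_{\text{fb}}$. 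This yields $S^*_{\text{sub}} \ge S^*_{\text{fb}}$ and hence equality.

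The main obstacle I anticipate is reconciling the support/attainability details: the first-best threshold $u^\dagger = \mathrm{MTE}^{-1}(0)$ need not lie in the image $\{g(w,z)\}$ for every $(w,z)$, and the restriction to $\mathbf{1}\{U_D \in \mathrm{Supp}(g(W,Z))\}$ means the comparison is over a common, possibly $(W,Z)$-dependent margin set. I would handle this by exploiting that positive selection guarantees no gain is left on the table — any margin $u > u^\dagger$ in the support has nonpositive MTE, so excluding it (which subsidies can do by choosing a lower propensity) is weakly optimal, and any margin $u \le u^\dagger$ has nonnegative MTE, so including it is weakly optimal. The monotone structure ensures these two prescriptions are simultaneously implementable by a single threshold, which is exactly what a subsidy-induced propensity score delivers, so the apparent friction between the infeasible and subsidy classes dissolves.
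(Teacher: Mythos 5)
Your proposal is correct and follows essentially the same route as the paper's proof: both write the infeasible welfare as $\mathbb{E}\bigl[\tilde\tau(W,Z,U_D)\mathbf{1}\{U_D\in\mathrm{Supp}(g(W,Z))\}\,\mathrm{MTE}(U_D)\bigr]$, identify the optimal infeasible policy as $\mathbf{1}\{\mathrm{MTE}(u)\ge 0\}$, use positive selection to turn that set into an initial segment $[0,u^*]$, and match it against the representation of $S^*_{\text{sub}}$ as a supremum of integrals of the MTE over sets of the form $[0,g(W,\pi(W))]\cap\mathrm{Supp}(g(W,Z))$. The only difference is that you make explicit the easy inequality $S^*_{\text{sub}}\le S^*_{\text{fb}}$ and flag the attainability of the threshold $u^*$ in the image of $g(w,\cdot)$, a point the paper's proof passes over silently with ``the result follows by observing the expression of $S^*_{\text{sub}}$.''
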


The underlying assumption behind Proposition \ref{prop:first_best} is that a subsidy-based policy can replicate the treatment assignment of the infeasible policy when the MTE is decreasing. Specifically, the infeasible first-best policy assigns all individuals with $\text{MTE}(u) \geq 0$ to the treatment group. Let $u_x^*$ be a solution to $\text{MTE}(u) = 0$. As individuals with higher MTE are always induced first in the case of positive selection, the subsidy-based policy can achieve the same counterfactual treatment choice if the individuals with $U_D = u_x^*$ are indifferent about the treatment choice under the policy, implying that individuals with $\text{MTE}(u) \geq 0$ are induced to the treatment group.



\subsection{Graphical Illustration}

To demonstrate the welfare comparisons using graphs, we impose the same simplifying assumptions as in Section \ref{subsec:graph}. We further assume that the support $\text{Supp}(g(W,Z)) = [0,1]$. Under these assumptions, there are two direct policies $\tau=0$ and $\tau=1$. Figure \ref{fig:welfare-direct} demonstrates the welfare of the two direct policies, which is either $0$ or the integral of the MTE on the entire unit interval $[0,1]$. Figure \ref{fig:welfare-characterization} shows that subsidy rules can freely select the integration region. Namely, the MTE is integrated over $[0,g(\pi)]$. Unless $\pi^*$ is a corner solution that belongs to $\{0,1\}$, we can choose a subsidy $\pi$ with $g(\pi) \in (0,1)$ that achieves a strictly higher welfare than the two direct policies.

Here, an infeasible policy $\tilde{\tau}$ is a function from $\text{Supp}(U_D) = [0,1]$ to $\{0,1\}$. To maximize the welfare, the policy-maker would want to mandate anyone with nonnegative MTE$(U_D)$ into treatment and exclude others from the treatment. The optimal infeasible policy $\mathbf{1}\{\text{MTE}(u) \geq 0\}$ achieves the first-best welfare. Figure \ref{fig:infeasible-welfare} shows that the optimal infeasible policy becomes a (feasible) subsidy rule when the MTE decreases.

\begin{figure}[!htbp]
	\begin{center}
		\begin{subfigure}[]{0.5\textwidth}
			\begin{tikzpicture}
	
				\begin{axis}[
					legend style={nodes={scale=0.8, transform shape}},
					axis y line=middle, 
					axis x line=middle,
					y axis line style={opacity=0},
					ytick={0},
					xtick={0,1},
				]
				\addplot [
					thick,
					name path=mte,
					domain=0:1, 
					samples=100, 
					color=black,
				]
				{0.399691 +   4.956392 * x  -74.935120*x^2 + 275.423206*x^3 -426.853331*x^4+  298.923460*x^5  -78.211517*x^6};

				after end axis/.code={
					\path (axis cs:0,0) node [anchor=north west,yshift=-0.075cm,xshift=-0.075cm] {0};
					
				}
				
				\end{axis}
				
				\end{tikzpicture}
				\caption{Welfare of direct policy $\tau = 0$}
		\end{subfigure}%
		\begin{subfigure}[]{0.5\textwidth}
			\begin{tikzpicture}
	
				\begin{axis}[
					legend style={nodes={scale=0.8, transform shape}},
					axis y line=middle, 
					axis x line=middle,
					y axis line style={opacity=0},
					ytick={0},
					xtick={0,1},
				]
				\addplot [
					thick,
					name path=mte,
					domain=0:1, 
					samples=100, 
					color=black,
				]
				{0.399691 +   4.956392 * x  -74.935120*x^2 + 275.423206*x^3 -426.853331*x^4+  298.923460*x^5  -78.211517*x^6};
				\addlegendentry{MTE Curve};
				\path[name path=axis] (axis cs:0,0) -- (axis cs:1,0);
				\addplot [
					thick,
					color=blue,
					fill=blue, 
					fill opacity=0.5
				]
				fill between[
					of=mte and axis,
					soft clip={domain=0:0.2},
				];
				\addlegendentry{Positive Welfare};
				\addplot [
					thick,
					color=red,
					fill=red, 
					fill opacity=0.5
				]
				fill between[
					of=mte and axis,
					soft clip={domain=0.2:0.5},
				];
				\addlegendentry{Negative Welfare};
				\addplot [
					thick,
					color=blue,
					fill=blue, 
					fill opacity=0.5
				]
				fill between[
					of=mte and axis,
					soft clip={domain=0.5:0.815},
				];
				\addplot [
					thick,
					color=red,
					fill=red, 
					fill opacity=0.5
				]
				fill between[
					of=mte and axis,
					soft clip={domain=0.815:0.99},
				];
			
				after end axis/.code={
					\path (axis cs:0,0) node [anchor=north west,yshift=-0.075cm,xshift=-0.075cm] {0};
					
				}
				
				\end{axis}
				
				\end{tikzpicture}
				\caption{Welfare of direct policy $\tau = 1$}
		\end{subfigure}

	\end{center}
	\caption{Welfare Characterization}
	\label{fig:welfare-direct}
	\caption*{\footnotesize Graph (a) shows the welfare of $\tau = 0$, which is equal to zero because we assume that $Y_0 = 0$. Graph (b) shows the welfare of $\tau = 1$, which is equal to the integral of MTE over $[0,1]$. A comparison of these two graphs with Figure \ref{fig:welfare-characterization} shows that direct policies are two extremum cases of subsidy rules. Therefore, the latter is superior in terms of welfare. This is the content of Proposition \ref{prop:value_instrument}.}
\end{figure}

\begin{figure}[!htbp]
	\begin{center}
	\begin{subfigure}[]{.5\textwidth}
		\begin{tikzpicture}
	
			\begin{axis}[
				legend style={nodes={scale=0.8, transform shape}},
				axis y line=middle, 
				axis x line=middle,
				y axis line style={opacity=0},
				ytick={0},
				xtick={0,1},
			]
			\addplot [
				thick,
				name path=mte,
				domain=0:1, 
				samples=100, 
				color=black,
			]
			{0.399691 +   4.956392 * x  -74.935120*x^2 + 275.423206*x^3 -426.853331*x^4+  298.923460*x^5  -78.211517*x^6};
			\path[name path=axis] (axis cs:0,0) -- (axis cs:1,0);
			\addplot [
				thick,
				color=blue,
				fill=blue, 
				fill opacity=0.5
			]
			fill between[
				of=mte and axis,
				soft clip={domain=0:0.2},
			];
			\addplot [
				thick,
				color=blue,
				fill=blue, 
				fill opacity=0.5
			]
			fill between[
				of=mte and axis,
				soft clip={domain=0.495:0.815},
			];
		
			after end axis/.code={
				\path (axis cs:0,0) node [anchor=north west,yshift=-0.075cm,xshift=-0.075cm] {0};
				
			}
			
			\end{axis}
			
		\end{tikzpicture}
		\caption{Non-monotonic MTE}
	\end{subfigure}%
	\begin{subfigure}[]{.5\textwidth}
		\begin{tikzpicture}
	
			\begin{axis}[
				legend style={nodes={scale=0.8, transform shape}},
				axis y line=middle, 
				axis x line=middle,
				y axis line style={opacity=0},
				ytick={0},
				xtick={0,1},
			]
			\addplot [
				thick,
				name path=mte,
				domain=0:1, 
				samples=100, 
				color=black,
			]
			{cos(deg(x)*3.1415) + 0.25};
			\addlegendentry{MTE Curve};
			\path[name path=axis] (axis cs:0,0) -- (axis cs:1,0);
			\addplot [
				thick,
				color=blue,
				fill=blue, 
				fill opacity=0.5
			]
			fill between[
				of=mte and axis,
				soft clip={domain=0:0.58},
			];
			\addlegendentry{First-best Welfare};
		
			after end axis/.code={
				\path (axis cs:0,0) node [anchor=north west,yshift=-0.075cm,xshift=-0.075cm] {0};
				
			}
			
			\end{axis}
			
			\end{tikzpicture}
			\caption{Decreasing MTE}
	\end{subfigure}
		
	\end{center}
	\caption{First-best Welfare under Infeasible Policies}
	\label{fig:infeasible-welfare}
	\caption*{\footnotesize The blue region indicates the first-best welfare achieved by the (infeasible) first-best policy $\tilde{\tau}(u) = \mathbf{1}\{\text{MTE}(u) \geq 0\}$. Graph (a) shows the first-best welfare under a general non-monotonic MTE curve. By explicitly targeting $U_D$, the policy-maker can avoid the red regions that appear in Figure \ref{fig:welfare-characterization}. Here, the first-best welfare is strictly greater than the welfare of any subsidy rules. In (b), the MTE is decreasing. The first-best welfare is equal to the welfare achieved by the optimal subsidy rule, as shown in Figure \ref{fig:monotone-MTE}. This is the content of proposition \ref{prop:first_best}. }
\end{figure}

\section{Identifying the Welfare Ranking}\label{sec:weflare_ranking}

In this section, we consider the identification of MTE and address the issue of identifying the optimal subsidy rule. We first discuss two scenarios in which the optimal policy can be point-identified. We then identify partial ranking among subsidy rules. 

By identification, we represent the objects of interest, such as welfare ranking and optimal policy, by the joint distribution of $(Y,D,X,W,Z)$. The analysis is conducted under the full knowledge of observable distributions, namely, the joint distribution of $(Y,D,X,W,Z)$. The welfare ranking $\succsim$ is an ordering on the policy space $\Pi$ such that $\pi\succsim\pi'$, if and only if $S(\pi) \geq S(\pi')$. The identified ranking is deemed partial if for some pairs of policies $(\pi,\pi')$ it is impossible to determine whether $S(\pi) \geq S(\pi')$ or $S(\pi) \leq S(\pi')$, given the observable distribution.

\subsection{Point identification of the optimal subsidy}


The welfare representation result indicates that both the welfare ranking and optimal policy are identified if the entire MTE curve and propensity score $g$ are identified. The following corollary presents the results.

\begin{corollary}
If for some $(x,w)\in\text{Supp}(X,W)$, $\text{MTE}(x,\cdot)$ is identified on $[0,1]$ and $g(x,w,\cdot)$ is identified on  $\mathcal{Z}^p$, then the optimal subsidy $\pi^*(x,w)$ is also identified.
\end{corollary}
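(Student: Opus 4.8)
The plan is to invoke the welfare characterization and optimality results already established, and show that under the stated identification hypotheses every ingredient entering the definition of $\pi^*(x,w)$ is a known functional of the observable distribution. First I would fix the pair $(x,w)$ for which $\text{MTE}(x,\cdot)$ is identified on $[0,1]$ and $g(x,w,\cdot)$ is identified on $\mathcal{Z}^p$. Recall from Proposition \ref{prop:welfare_representation} that the welfare contribution of type $(x,w)$ under a subsidy $z$ takes the form
\begin{align*}
    \int_0^{g(x,w,z)} \text{MTE}(x,u)\,du - z\,g(x,w,z),
\end{align*}
so the point-wise optimization in Equation (\ref{eqn:optim_prob}) that defines the optimal take-up rate (and hence $\pi^*(x,w)$ via Equation (\ref{eqn:second_step})) depends on the primitives only through $\text{MTE}(x,\cdot)$ and $g(x,w,\cdot)$.

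The key steps, in order, are as follows. I would first note that, because $\mathcal{Z}^p$ is the domain of the optimization and $g(x,w,\cdot)$ is identified there, the feasible image set $I_{x,w} = g(x,w,\mathcal{Z}^p)$ is identified, and under Assumption \ref{assu:invertibility} the inverse $g_{x,w}^{-1}$ is identified on $I_{x,w}$. Next, the integrand $\text{MTE}(x,u)$ is identified for every $u \in [0,1] \supseteq I_{x,w}$, so the objective function in Equation (\ref{eqn:optim_prob}) is a fully identified function of $u_{x,w}$ over the identified feasible set. Therefore its argmax, the optimal take-up rate $u^*_{x,w}$, is identified; applying the identified inverse $g_{x,w}^{-1}$ then identifies $\pi^*(x,w) = g_{x,w}^{-1}(u^*_{x,w})$. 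One can equivalently argue directly through the optimality characterization: under the maintained assumptions the candidate solutions to $\Lambda(x,w,z)=0$ from Equation (\ref{eqn:opt_w_voucher}) are determined by $\text{MTE}(x,g(x,w,z))$, $g(x,w,z)$, and $\partial_z g(x,w,z)$, all identified, and the boundary comparisons at $z_l,z_u$ involve only identified quantities, so selecting the welfare-maximizing candidate is an identified operation.

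The main conceptual obstacle is less a deep difficulty than a matter of stating precisely \emph{which} optimization is being identified: since welfare is separable across types and each type's contribution depends only on its own $(\text{MTE}(x,\cdot), g(x,w,\cdot))$, the global optimization decouples into the point-wise problem for $(x,w)$, and it is exactly this decoupling (already exploited in deriving Proposition \ref{prop:optimality_condition}) that lets the local identification hypothesis deliver identification of $\pi^*(x,w)$ without requiring the MTE curve or propensity score to be known at other covariate values. A minor technical point to handle is non-uniqueness of the argmax (flagged in the Remark): when the optimizer is not unique the statement should be read as identifying the optimal \emph{welfare} and the optimal take-up set, with $\pi^*(x,w)$ pinned down up to the measure-zero indeterminacy discussed earlier; invertibility of $g(x,w,\cdot)$ guarantees that each optimal take-up rate corresponds to a unique subsidy level, so no additional ambiguity is introduced by the change of variables.
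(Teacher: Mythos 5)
Your proposal is correct and follows essentially the same route as the paper, which treats the corollary as an immediate consequence of the welfare representation in Proposition \ref{prop:welfare_representation}: once $\text{MTE}(x,\cdot)$ and $g(x,w,\cdot)$ are identified, the point-wise objective in Equation (\ref{eqn:optim_prob}) is a known functional of the data, so its maximizer and the corresponding subsidy via $g_{x,w}^{-1}$ are identified. Your additional remarks on the decoupling across types and on non-uniqueness are careful but do not change the argument.
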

\cite{heckman2005structural} showed that the MTE can be identified using \emph{local instrumental variables} (LIV). For completeness, we restate the MTE identification result . Define a function $m$ by
\begin{align*}
	m(x,u) \equiv \frac{d}{dp} \Big|_{p = u} \mathbb{E} [Y \mid X = x, g(X,W,Z) = p] \cdot \mathbf{1}_{\text{Supp}(g(x,W,Z))}(u).
\end{align*}
The function $m$ can be identified from the data, given that as the propensity score is identified on its support. The following lemma from \cite{heckman2005structural} states that $m$ is equal to the MTE on $\textit{Supp}(X,g(X,W,Z))$.

\begin{lemma}[Identification of MTE] Suppose that Assumptions \ref{assu:random_assignment}-\ref{assu:density_existence} hold. Further, assume that $g(X,W,Z)$ is a non-degenerate random variable conditional on $X$ and that $0<P(D = 1|X)<1$.
Then $\text{MTE}(x,u) = m(x,u), \text{ for all } (x,u) \in \text{Supp}(X,g(X,W,Z))$.
\end{lemma}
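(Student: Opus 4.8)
The plan is to derive a closed-form expression for the conditional mean $\mathbb{E}[Y \mid X = x, P = p]$, where $P \equiv g(X,W,Z)$ denotes the propensity score, and then recover the MTE by differentiating in $p$. Writing the observed outcome as $Y = Y_0 + D(Y_1 - Y_0)$ and using the selection rule $D = \mathbf{1}\{P \geq U_D\}$, I would first note that conditioning on $X = x$ and $P = p$ pins down the selection event as $D = \mathbf{1}\{p \geq U_D\}$, so the conditional mean splits into a baseline term $\mathbb{E}[Y_0 \mid X = x, P = p]$ and a selection term $\mathbb{E}[\mathbf{1}\{p \geq U_D\}(Y_1 - Y_0) \mid X = x, P = p]$.

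The crux is the independence argument. Since $P = g(x,W,Z)$ is a function of $(W,Z)$ once $X = x$ is fixed, Assumption \ref{assu:random_assignment} implies that $(U_1, U_0, U_D)$ — and hence $(Y_1, Y_0)$ through Equation (\ref{eqn:outcome_equation}) — is jointly independent of $P$ given $X = x$. Consequently the baseline term equals $\mathbb{E}[Y_0 \mid X = x]$ and does not depend on $p$. For the selection term, the same joint independence lets me drop the conditioning on $P = p$ and integrate out $U_D$; invoking the normalization $U_D \mid X \sim \text{Unif}[0,1]$ afforded by Assumption \ref{assu:density_existence}, this gives
\begin{align*}
\mathbb{E}[\mathbf{1}\{p \geq U_D\}(Y_1 - Y_0) \mid X = x] = \int_0^p \mathbb{E}[Y_1 - Y_0 \mid X = x, U_D = u]\, du = \int_0^p \text{MTE}(x,u)\, du.
\end{align*}
Assumption \ref{assu:moment_existence} guarantees these expectations are finite, so the interchange is legitimate. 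Collecting terms yields the identity $\mathbb{E}[Y \mid X = x, P = p] = \mathbb{E}[Y_0 \mid X = x] + \int_0^p \text{MTE}(x,u)\, du$.

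Finally, I would differentiate this identity in $p$ and evaluate at $p = u$. The first summand is constant in $p$, and the fundamental theorem of calculus applied to the second summand returns $\text{MTE}(x,u)$, which is exactly $m(x,u)$; the factor $\mathbf{1}_{\text{Supp}(g(x,W,Z))}(u)$ merely records that the derivative is defined — and the identity claimed — only for $u$ in the support of the propensity score, which Assumption \ref{assu:nontrivial} together with $0 < P(D = 1 \mid X) < 1$ ensures is nondegenerate. I expect the main obstacle to be the measure-theoretic bookkeeping around conditioning on the event $\{P = p\}$ and justifying the differentiation: absent the continuity Assumption \ref{assu:continuity}, the map $p \mapsto \int_0^p \text{MTE}(x,u)\,du$ is only absolutely continuous, so by the Lebesgue differentiation theorem the equality holds at Lebesgue-almost-every $u$. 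I would therefore either state the conclusion in the almost-everywhere sense or observe that continuity of $\text{MTE}(x,\cdot)$ upgrades it to every interior point of the support.
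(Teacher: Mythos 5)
Your proposal is correct and follows the canonical local-IV argument: decompose $Y = Y_0 + D(Y_1-Y_0)$, use the conditional independence of $(U_1,U_0,U_D)$ from $(W,Z)$ given $X$ to drop the conditioning on $P=p$, integrate out $U_D\sim\text{Unif}[0,1]$ to obtain $\mathbb{E}[Y\mid X=x,P=p]=\mathbb{E}[Y_0\mid X=x]+\int_0^p \text{MTE}(x,u)\,du$, and differentiate. The paper does not supply its own proof of this lemma --- it imports the result from \cite{heckman2005structural} --- but your derivation is exactly the standard one, and it mirrors the identical integral representation the authors do derive in their proof of Proposition \ref{prop:welfare_representation}. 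Your closing caveat is well placed: Assumptions \ref{assu:random_assignment}--\ref{assu:density_existence} alone only make $p\mapsto\int_0^p\text{MTE}(x,u)\,du$ absolutely continuous, so the derivative defining $m(x,u)$ exists only Lebesgue-almost everywhere (and only at accumulation points of $\text{Supp}(g(x,W,Z))$) unless one additionally invokes the continuity of $\text{MTE}(x,\cdot)$ from Assumption \ref{assu:continuity}, a gap the paper inherits from the cited source rather than resolves.
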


Lemma 1 shows that, to identify the entire MTE curve, the support of the propensity score must cover the unit interval for every $x\in\text{Supp}(X)$. Effectively, a large enough exogenous variation in the propensity score induced by $(W,Z)$ is needed. In literature, this is termed as the large support assumption. Identification of the propensity score $g$ on $\text{Supp}(X,W,Z)$ is straightforward because it is simply the observed probability of take-up, given the covariates and instruments. When $Z^p\not\subset Z$, $g$ on $\text{Supp}(X,W,Z)$ cannot be identified by imposing parametric restriction on the propensity score $g$ such as the probit model.

Point identification of the MTE curve is unnecessary for identifying the optimal policy in two scenarios: First, if the policies under consideration assign subsidies only from the support of $Z$, that is, when $\mathcal{Z}^p \subset \mathcal{Z}$, then the optimal policy is identified. As shown in the next proposition, it is possible to identify the optimal policy without identifying MTE. Define $\mathscr{P}^{id} = \{\pi\in\Pi: \pi(x,w) \in \text{Supp}(Z \mid X=x, W=w) \;\text{for all}\;(x,w)\in \text{Supp}(X,W)\}$ as the set of identiable subsidy rules. 

\begin{proposition}[Identification on the Support] \label{prop:identification_observed}  Suppose that Assumptions \ref{assu:random_assignment}-\ref{assu:density_existence} hold. 
	Then, for any $\pi \in\mathscr{P}^{id}$, 
		\begin{align} \label{eqn:id_observed_z}
		\mathbb{E}[Y^{\pi}\mid X,W] = \mathbb{E}[Y\mid X,W,Z=\pi(X,W)],
		\end{align}
		and
		\begin{align}
		  \mathbb{E}[C^\pi\mid X,W] &= \pi(X,W) \cdot \mathbb{E}[D\mid X,W,Z=\pi(X,W)]],
	\end{align}
where $\mathbb{E}[Y|X,W,Z=\pi(X,W)]$ and $\mathbb{E}[D\mid X,W,Z=\pi(X,W)]$ are identified as $\pi(x,w) \in \text{Supp}(Z \mid X=x, W=w)$. Thus, the welfare ranking on $\mathscr{P}^{id}$ is identified.
\end{proposition}

Proposition \ref{prop:identification_observed} states that the counterfactual welfare can be identified by empirical welfare provided that the subsidy under consideration is observed in the data. 

A second scenario in which no point identification of the MTE curve is needed is when individuals positively select into the treatment status. Under the assumption of positive selections, individuals with higher returns are always induced first by the subsidies. Therefore, an amount of subsidy is optimal if the marginal effect of subsidy is zero.
\begin{proposition}[Identification under Positive Selection]
\label{prop:id_positive_seleciton} Suppose the assumptions stated in Proposition \ref{prop:positive_seleciton} hold. If there exists $z^*\in \mathcal{Z}^p$ such that $g(x,w,z^*)$ and $\text{MTE}(x,g(x,w,z^*))$ are identified and that $\Lambda(x,w,z^*) = 0$, then $\pi^*(x,w) = z^*$.
\end{proposition}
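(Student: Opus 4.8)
The plan is to obtain the statement as a direct specialization of Proposition \ref{prop:positive_seleciton}, with the genuinely new content being that the optimality certificate $\Lambda(x,w,z^*)=0$ is a \emph{local} condition, so it can be checked from a single ordinate of the MTE curve rather than from the whole curve on $[0,1]$.

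First I would recall the structure behind Proposition \ref{prop:positive_seleciton}. Fixing $(x,w)$ and changing variables from the subsidy $z$ to the induced take-up rate $u=g(x,w,z)$ via the strictly increasing map of Assumption \ref{assu:invertibility}, the pointwise objective in (\ref{eqn:optim_prob}) is
\[
\int_0^{u}\text{MTE}(x,u')\,du' - g_{x,w}^{-1}(u)\,u .
\]
Under positive selection (Assumption \ref{assu:positive_selection}) the integral term is concave because $\text{MTE}(x,\cdot)$ is weakly decreasing, and under weak concavity of $g_{x,w}$ the inverse $g_{x,w}^{-1}$ is convex and increasing, so $-g_{x,w}^{-1}(u)\,u$ is also concave. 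Hence the objective is concave in $u$. Its derivative in $u$, re-expressed in the $z$ variable, is exactly $\Lambda(x,w,z)$; since $u=g(x,w,z)$ is increasing in $z$ and the derivative of a concave function is weakly decreasing in $u$, the map $z\mapsto\Lambda(x,w,z)$ is weakly decreasing on $[z_l,z_u]$.

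With this monotonicity in hand, the hypothesis $\Lambda(x,w,z^*)=0$ immediately places the problem in the interior branch of Proposition \ref{prop:positive_seleciton}: because $\Lambda$ is weakly decreasing, $\Lambda\geq 0$ for $z\leq z^*$ and $\Lambda\leq 0$ for $z\geq z^*$, so the objective rises up to $z^*$ and falls afterwards, making $z^*$ a global maximizer over $[z_l,z_u]$. By the two-step construction (\ref{eqn:optim_prob})--(\ref{eqn:second_step}), $\pi^*(x,w)=z^*$, and no new optimization argument is needed beyond invoking the proposition. For the identification claim I would then observe that evaluating $\Lambda(x,w,z^*)$ calls only for $g(x,w,z^*)$, its local slope $\partial_z g(x,w,z^*)$, and the single value $\text{MTE}(x,g(x,w,z^*))$ --- the first two being features of the observed take-up probability $P(D=1\mid X,W,Z)$ and the last identified by hypothesis --- so the value $z^*$ is pinned down by the observable distribution without identifying MTE on all of $[0,1]$.

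The step I expect to carry the conceptual weight is explaining \emph{why one point suffices}, i.e.\ the contrast with the general case. Absent monotone selection, $\Lambda(x,w,z^*)=0$ is only a first-order condition from Proposition \ref{prop:optimality_condition}, and certifying a global optimum would require ruling out competing interior critical points and the two boundary candidates $\{z_l,z_u\}$, which in turn demands global knowledge of the MTE curve; positive selection collapses this to the single identified ordinate by rendering $\Lambda$ monotone and the objective concave. I would also note, consistent with the non-uniqueness remark following (\ref{eqn:welfare_definition}), that $z^*$ is the unique optimizer under strict monotonicity, whereas under weak monotonicity any zero of $\Lambda$ is an equally optimal subsidy.
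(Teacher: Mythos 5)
Your proposal is correct and follows the same route as the paper, whose entire proof is the one-line observation that the claim is the middle case of Proposition \ref{prop:positive_seleciton}; your additional material (the concavity of the take-up-rate objective and the monotonicity of $z\mapsto\Lambda(x,w,z)$) simply restates the argument already contained in the paper's proof of that proposition. The remarks on why a single identified ordinate of the MTE suffices are accurate commentary but not new mathematics.
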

Proposition \ref{prop:id_positive_seleciton} states that, if the selection is positive, the optimal amount subsidy can be identified provided that point at which the marginal effect is zero is known. Therefore, no instruments are needed to have large support if it contains the point having a zero marginal effect. Even if the requirement is not met, imposing positive selection still has identification power, as shown in the next subsection.

\subsection{Partial ranking of subsidy rules}

While the results in the previous subsection yielded point identification, the requirements can be restrictive. Therefore, the method for obtaining a partial ranking of subsidy rules by imposing shape restrictions is discussed. 

Let $\mathcal{M}^o$ and $\mathcal{G}^o$ be sets of functions that represent, respectively, the functional parameter space of MTE and propensity under possible shape restrictions (e.g., parametric model, monotonicity, and boundedness). The true MTE is assumed to be an element of $\mathcal{M}^o$ and the true propensity is assumed to be an element of $\mathcal{G}^o$. Moreover, the true MTE must coincide with the identifiable function $m$ on the identified region. Therefore, the identified set $\mathcal{M}$ of MTEs under shape restrictions is
\begin{align*}
	\mathcal{M} = & \big\{ \bar{m}(x,u) \in  \mathcal{M}^o:  \bar{m}(x,u) = m(x,u) \text{ for all } (x,u) \in \text{Supp}(X,g(X,W,Z)) \big\} .
\end{align*}
The identified set $\mathcal{G}$ of propensities is
\begin{align*}
	\mathcal{G} = & \left\{ g \in \mathcal{G}^o : g(x,w,z) = \mathbb{E}[D\mid X=x,W=w,Z=z], (x,w,z) \in \text{Supp}((X,Z,W)) \right\}.
\end{align*}
Let $\langle \cdot, \cdot \rangle$ be the inner product with respect to the measure that underlies the random vector $(X, U_D)$. Define the dual cone and polar cone of $\mathcal{M}$, respectively, as
\begin{align} \label{eqn:dual_cone}
	\mathcal{M}^* & = \{ \ell : \langle l,\bar{m} \rangle \geq 0, \bar{m} \in \mathcal{M} \} \text{, and }
	\mathcal{M}^\times = -\mathcal{M}^*.
\end{align}

For any subsidy-based policy $\pi$ and propensity $g$, we use $F_{g,\pi}(x,u)$ to denote the conditional cumulative distribution function (CDF) of the propensity score $g(X,W,\pi(X,W))$ given $X = x$, that is,  $F_{g,\pi}(x,u) \equiv \mathbb{P}(g(X,W,\pi(X,W)) \leq u\mid X = x)$.

\begin{proposition} [Identification of Partial Ranking] \label{prop:id_ranking}
Suppose that Assumptions \ref{assu:random_assignment}-\ref{assu:density_existence} hold. For simplicity, assume that the cost $c = 0$. Let $(\pi,\pi')$ be a pair of subsidy rules. If $\{F_{g,\pi'} - F_{g,\pi}: g \in \mathcal{G}\} \subset \mathcal{M}^*$, then $\pi \succsim \pi'$. If $\{F_{g,\pi'} - F_{g,\pi}: g \in \mathcal{G}\} \subset \mathcal{M}^\times$, then $\pi' \succsim \pi$. 
\end{proposition}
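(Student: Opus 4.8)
The plan is to reduce the welfare comparison to the sign of a single inner product and then invoke the defining property of the dual and polar cones. Since $c=0$, Proposition~\ref{prop:welfare_representation} gives $S(\pi) = \mathbb{E}[Y_0] + \mathbb{E}[\int_0^{g(X,W,\pi(X,W))} \text{MTE}(X,u)\,du]$, so the baseline term $\mathbb{E}[Y_0]$ drops out of any welfare difference. The first step is to rewrite the expected integral of the MTE as an integral against the distribution function $F_{g,\pi}$. Conditioning on $X=x$ and treating $P = g(x,W,\pi(x,W))$ as a random variable (with randomness from $W \mid X=x$), I would write $\int_0^{P} \text{MTE}(x,u)\,du = \int_0^1 \mathbf{1}\{u < P\}\,\text{MTE}(x,u)\,du$ and apply Tonelli/Fubini (legitimate since Assumption~\ref{assu:moment_existence} makes $\text{MTE}(X,\cdot)$ integrable) to exchange the expectation over $W$ with the $u$-integral, obtaining
\[
\mathbb{E}\Big[\int_0^{g(X,W,\pi(X,W))} \text{MTE}(X,u)\,du \,\Big|\, X=x\Big] = \int_0^1 \big(1 - F_{g,\pi}(x,u)\big)\,\text{MTE}(x,u)\,du,
\]
where I used $\mathbb{P}(P > u \mid X=x) = 1 - F_{g,\pi}(x,u)$; the event $\{u=P\}$ is Lebesgue-null in $u$, so strict versus weak inequality is immaterial.

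Second, taking the difference of the two welfares and recalling that $\langle\cdot,\cdot\rangle$ is the inner product with respect to the law of $(X,U_D)$, with $U_D \mid X \sim \text{Unif}[0,1]$, I would arrive at the clean identity
\[
S(\pi') - S(\pi) = \mathbb{E}\Big[\int_0^1 \big(F_{g,\pi}(X,u) - F_{g,\pi'}(X,u)\big)\text{MTE}(X,u)\,du\Big] = \big\langle F_{g,\pi} - F_{g,\pi'},\, \text{MTE}\big\rangle,
\]
equivalently $S(\pi') - S(\pi) = -\langle F_{g,\pi'} - F_{g,\pi},\,\text{MTE}\rangle$. This collapses the proposition to controlling one inner product in which one slot holds the (partially unknown) true MTE and the other holds a CDF difference fixed by the (partially unknown) true propensity.

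Third, I would close the argument using the definitions in~\eqref{eqn:dual_cone}. Write $g_0 \in \mathcal{G}$ for the true propensity; the true MTE lies in $\mathcal{M}$. Under the hypothesis $\{F_{g,\pi'} - F_{g,\pi} : g \in \mathcal{G}\} \subset \mathcal{M}^*$, in particular $F_{g_0,\pi'} - F_{g_0,\pi} \in \mathcal{M}^*$, so by the defining property of the dual cone $\langle F_{g_0,\pi'} - F_{g_0,\pi}, \bar m\rangle \geq 0$ for every $\bar m \in \mathcal{M}$, and hence for $\bar m = \text{MTE}$. Thus $S(\pi') - S(\pi) = -\langle F_{g_0,\pi'} - F_{g_0,\pi}, \text{MTE}\rangle \leq 0$, i.e.\ $\pi \succsim \pi'$. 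The polar-cone case is symmetric: since $\mathcal{M}^\times = -\mathcal{M}^*$, membership $F_{g_0,\pi'} - F_{g_0,\pi} \in \mathcal{M}^\times$ gives $F_{g_0,\pi} - F_{g_0,\pi'} \in \mathcal{M}^*$, which flips the sign to $S(\pi') \geq S(\pi)$, i.e.\ $\pi' \succsim \pi$.

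The conceptual crux—and the reason the hypothesis quantifies over all $g \in \mathcal{G}$ rather than one propensity—is that $g_0$ is never learned exactly; demanding that the CDF difference land in $\mathcal{M}^*$ for every admissible propensity is precisely what makes the conclusion robust to this ambiguity, in the spirit of \cite{kasy2016partial}. The only genuinely technical points are the Tonelli/Fubini exchange and checking that $F_{g,\pi} - F_{g,\pi'}$ lies in $L^1$ of the $(X,U_D)$-measure so that the inner product is well defined; both are routine, following from Assumption~\ref{assu:moment_existence} and the uniform bound $|F_{g,\pi} - F_{g,\pi'}| \leq 1$.
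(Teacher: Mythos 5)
Your proposal is correct and follows essentially the same route as the paper: both reduce the welfare to the inner product $\langle 1-F_{g,\pi},\text{MTE}\rangle$ (the paper via iterated expectations conditioning on $(X,U_D)$, you via Fubini on the integral representation from Proposition \ref{prop:welfare_representation} — a cosmetic difference) and then apply the dual/polar cone definitions to the welfare difference. Your explicit handling of the null event $\{P=u\}$ and the integrability check are minor refinements the paper leaves implicit.
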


Proposition \ref{prop:id_ranking} identifies a partial ranking among the subsidy rules. This result is the subsidy-rules analog of Proposition 1 in \cite{kasy2016partial}, who studied the identification of a partial ranking among direct policies. The partial identification is achieved by considering the geometry in the Hilbert space containing data-consistent MTE curves. For a pair of subsidy rules, if the difference between the induced CDFs of the propensity score is orthogonal to the set of plausible MTEs, then the data are uninformative about the welfare ranking between the policies. From the decision-theoretic perspective, the identified welfare ranking constitutes an incomplete ordering that admits an expected utility representation \citep{dubra2004expected}. 

Our next proposition states that when there is positive selection, the partial identification result for the policy ranking can be easily interpret.
 
\begin{proposition}[Direction of Welfare Improvement]\label{prop:partial_monotone} Suppose the assumptions in Proposition \ref{prop:positive_seleciton} hold, and let $\pi^*$ be an optimal policy. If $\Lambda(x,g(x,w,z)) \geq 0$  (resp. $\leq0$) for some $z \in \mathcal{Z}^p$, then $\pi^*(x,w) \geq z$ (resp. $\leq z$).

\end{proposition}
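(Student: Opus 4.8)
The plan is to reduce the statement to a single monotonicity property of the marginal-benefit function $\Lambda(x,w,\cdot)$ and then read off the conclusion from the trichotomy already established in Proposition \ref{prop:positive_seleciton}. Fix $(x,w)\in\text{Supp}(X,W)$ throughout; I read the $\Lambda$ appearing in the statement as $\Lambda(x,w,z)$, matching its definition in \eqref{eqn:opt_w_voucher}.

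First I would establish the key fact: under the maintained hypotheses (weakly decreasing $\text{MTE}$, and $g(x,w,\cdot)$ strictly increasing and weakly concave), the map $z\mapsto\Lambda(x,w,z)$ is weakly decreasing on $[z_l,z_u]$. I would argue termwise in the decomposition $\Lambda(x,w,z)=\text{MTE}(x,g(x,w,z))-z-g(x,w,z)\bigl[\partial_z g(x,w,z)\bigr]^{-1}$. The first term is a decreasing function ($\text{MTE}$) composed with an increasing function ($g$ in $z$), hence decreasing; the second term $-z$ is decreasing. For the third term, weak concavity makes $\partial_z g(x,w,z)$ weakly decreasing, so $\bigl[\partial_z g(x,w,z)\bigr]^{-1}$ is positive and weakly increasing, while $g(x,w,z)$ is a nonnegative, weakly increasing propensity; the product of two nonnegative weakly increasing functions is weakly increasing, so its negative is weakly decreasing. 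This monotonicity is precisely what underlies the clean three-case structure of Proposition \ref{prop:positive_seleciton}, so I can either invoke that proof or reproduce this short computation.

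Next, given $z$ with $\Lambda(x,w,z)\geq 0$, I apply the trichotomy. The case ``$\Lambda(x,w,\cdot)<0$ on all of $[z_l,z_u]$'' is ruled out immediately, as it contradicts $\Lambda(x,w,z)\geq 0$. In the case ``$\Lambda(x,w,\cdot)>0$ everywhere,'' the optimum is $\pi^*(x,w)=z_u\geq z$. In the remaining case there is $z^*$ with $\Lambda(x,w,z^*)=0$ and $\pi^*(x,w)=z^*$; since $\Lambda(x,w,z)\geq 0=\Lambda(x,w,z^*)$ and $\Lambda(x,w,\cdot)$ is weakly decreasing, the ordering of values forces $z\leq z^*$, i.e.\ $\pi^*(x,w)\geq z$. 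The claim for $\Lambda(x,w,z)\leq 0$ follows by the mirror-image argument: the ``$\Lambda>0$ everywhere'' case is excluded, the ``$\Lambda<0$ everywhere'' case gives $\pi^*(x,w)=z_l\leq z$, and in the interior case $\Lambda(x,w,z)\leq 0=\Lambda(x,w,z^*)$ with weak monotonicity yields $z\geq z^*=\pi^*(x,w)$.

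The main obstacle is really the only substantive step, namely the monotonicity of $\Lambda$, and within it the elasticity term, where both the concavity of $g$ and the nonnegativity of the propensity are needed to conclude that $g(x,w,z)\bigl[\partial_z g(x,w,z)\bigr]^{-1}$ is increasing; without concavity this term can move in either direction and the trichotomy no longer pins down the direction of $\pi^*$. A secondary subtlety is the degenerate case in which $\Lambda(x,w,\cdot)$ has a flat stretch at zero, so the optimizer is not unique: there the welfare is constant across the plateau (one can note $\tfrac{d}{dz}\bigl(\mathbb{E}[Y^\pi]-\mathbb{E}[C^\pi]\bigr)=\partial_z g(x,w,z)\cdot\Lambda(x,w,z)$, which vanishes on the plateau), and I would state that the conclusion is to be understood for the optimal policy selected by the characterization in Proposition \ref{prop:positive_seleciton}.
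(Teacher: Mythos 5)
Your proof is correct, and it reaches the conclusion by a slightly different route than the paper. The paper's proof works in $u$-space: it notes that the take-up-rate objective $f(u)$ is concave (this was verified, with the cost terms, in the proof of Proposition \ref{prop:positive_seleciton}), that its derivative at $u_0=g(x,w,z)$ equals $\Lambda(x,w,z)$, and that $f'(u_0)\geq 0$ plus concavity forces the maximizer to lie weakly to the right of $u_0$; inverting the strictly increasing propensity then gives $\pi^*(x,w)\geq z$. You instead work directly in $z$-space, proving termwise that $\Lambda(x,w,\cdot)$ is weakly decreasing and then running the trichotomy of Proposition \ref{prop:positive_seleciton}. The two key lemmas are equivalent via the chain rule ($\Lambda(x,w,z)=f'(g(x,w,z))$ with $g$ increasing in $z$), so the underlying idea --- decreasing marginal benefit of the subsidy under positive selection and a concave propensity --- is the same. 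What your version buys is that it keeps the cost terms explicit throughout: the paper's written proof of this proposition drops the cost and differentiates only $\int_0^u\text{MTE}(x,u')\,du'$, even though the statement is phrased in terms of the full $\Lambda$, so your argument is in this respect the more faithful one. Your flag of the flat-plateau degeneracy is also apt; the paper's proof has the same ambiguity when the argmax is a nondegenerate interval, and both arguments should be read as statements about a suitably selected optimizer, consistent with the paper's remark on non-uniqueness.
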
 
The same intuition behind Proposition \ref{prop:positive_seleciton} applies here: Under positive selection, the marginal benefit of subsidy decreases. Recall that the function $\Lambda(x,w,z)$ only depends on the MTE and the propensity score, and it represents the marginal return of subsidy. Therefore, if the MTE and the propensity score are identified at a specific point $(x,w,z)$, then the optimal subsidy can be bounded from below if the marginal return at that point is positive and can be bounded from above when it is negative.

\section{Empirical Application}\label{sec:empirical}
Our method is illustrated by applying it to the experimental data from the Jordan New Opportunities for Women (Jordan NOW) pilot study. In the experiment, vouchers for wage subsidies are randomly assigned to female college students in their last year of education. These vouchers can be presented to firms while seeking jobs, and if a student with a voucher is employed, the employer can redeem the voucher for up to six months for an amount equal to the minimum wage. The premise of the program is that wage subsidies can help students land their first jobs in which they can acquire experience and skills that can help in their long-term careers. We refer our readers to \cite{groh2016wage} for more details about the background and their experiment design.

In their study, \cite{groh2016wage} found that, although wage subsidies substantially increase the employment rate immediately after graduation, their effects on the long-term labor market participation is limited. Specifically, wage subsidies increase the employment rate by about 38 percentage points during the subsidized period. However, the effect vanishes rapidly after the subsidy expires. Seventeen months after the subsidies expire, the effect on employment is less than two percentage points and not statistically significant.  \cite{groh2016wage} concluded that providing wage subsidies is not an effective measure to promote women's long-term labor market participation, at least in the context of Jordan.

In our empirical exercise, we investigate the effectiveness of wage subsidies can be improved through targeting. The welfare function considered is the 30-month earnings ($Y$) after the subsidy period minus the cost of wage subsidy.\footnote{We proxy the 30-month earnings by the monthly earnings reported at the last round, which occurred two years after the voucher had expired.} In the model, $Y$ is the realization of one of the potential outcomes $Y_1$ and $Y_0$, depending on whether the student successfully found a job after graduation during the subsidized period ($D = 1$) or not ($D = 0$). In our policy exercise, we target based on the student's college major. Specifically, we target on whether the student majors in medical assistance ($X$), which includes nursing and pharmacy specializations. We search for the optimal subsidy within the space $\mathcal{Z}^p = [0,900]$, where $z^p = 0$ refers to no subsidy and  $z^p = 900$ is the maximal subsidy an individual could receive in the experiment. We use voucher cost in Example \ref{ex:voucher_cost}.

The optimal amount of subsidy critically depends on two factors: (1) how effectively can wage subsidy encourage and help students to find their first job and (2) the treatment effect of having a first job after graduation on the long-term labor market outcome. These effects are traditionally quantified by imposing joint normality on the error terms $(U_1,U_0,U_D)$ and their independence to $(X,Z)$. Then, the outcome and choice equations are estimated together with either the method of maximum likelihood or the two-step method proposed by \cite{heckman1976common}. For this illustration, we will proceed accordingly. Although these parametric assumptions could be restrictive, they yield accurate estimates when the sample size is modest. We provide a more flexible method that does not impose normality in Appendix \ref{app:semiparametric}. 

Formally, we estimate the following selection model:
\begin{align*}
Y_1 &= X'\beta_1 + U_1, \\
Y_0 &= X'\beta_0 + U_0, \\
D &= \mathbf{1}\{X'\beta_D + Z\gamma - \tilde{U}_D \geq 0\},\\
(X,&Z) \perp (U_1,U_0,\tilde{U_D}), (U_1,U_0,\tilde{U_D})  \sim \mathcal{N}(0, \Sigma),
\end{align*} 
where 
\[
 \Sigma = \begin{pmatrix}
\sigma_1^2 & \rho_{01}\sigma_0\sigma_1 & \rho_1\sigma_1\\
\rho_{01}\sigma_0\sigma_1 & \sigma_0^2 & \rho_0\sigma_0  \\
\rho_1\sigma_1 & \rho_0\sigma_0  & 1
\end{pmatrix}. 
\]

\begin{table}[!htbp]    
	\captionsetup{skip=0pt}
\caption{Estimates of the Heckman selection model}
\begin{center}
\begin{threeparttable}
\begin{tabular}{c *{3}{d{3.3}} }
    \toprule
    \toprule  
    \textbf{} &\multicolumn{1}{c}{ \textbf{Choice equation}} & \multicolumn{1}{c}{\textbf{Outcome equation ($Y_0$)}} & \multicolumn{1}{c}{\textbf{Outcome equation $(Y_1)$}} \\
\midrule
      Medical major    &      $0.2965$       &    $1743.9040$     &  $2677.209$\\
    			       &     $(0.0894)$      &   $(351.3193)$      &  $(420.9792)$  \\
      Subsidy   	       &      $0.0017$        &                          &  \\ 
     		               &      $(0.0001)$      &                          &  \\
      Intercept            &    $-0.9359$        & $607.5856$          & $660.1336$ \\ 
		               &     $(0.0741)$       & $(192.6821)$       & $(437.1357)$\\ \midrule
     $\rho_d$             &                         & $-0.0889$ 	   & $0.3802$ \\
     $\sigma_d$	       & 		 	&   $2596.7705$    & $3399.0894$ \\
\bottomrule
\end{tabular}
\begin{tablenotes}
\item \footnotesize The sample size is 1347. Standard errors are obtained by bootstraps. Wage subsidies significantly increase the probability of finding a job.
\end{tablenotes}
\end{threeparttable}
\end{center}
\label{tbl:heckman} 
\end{table}

\begin{figure}[!htbp]    
	\centering
    \includegraphics[width=0.6\textwidth]{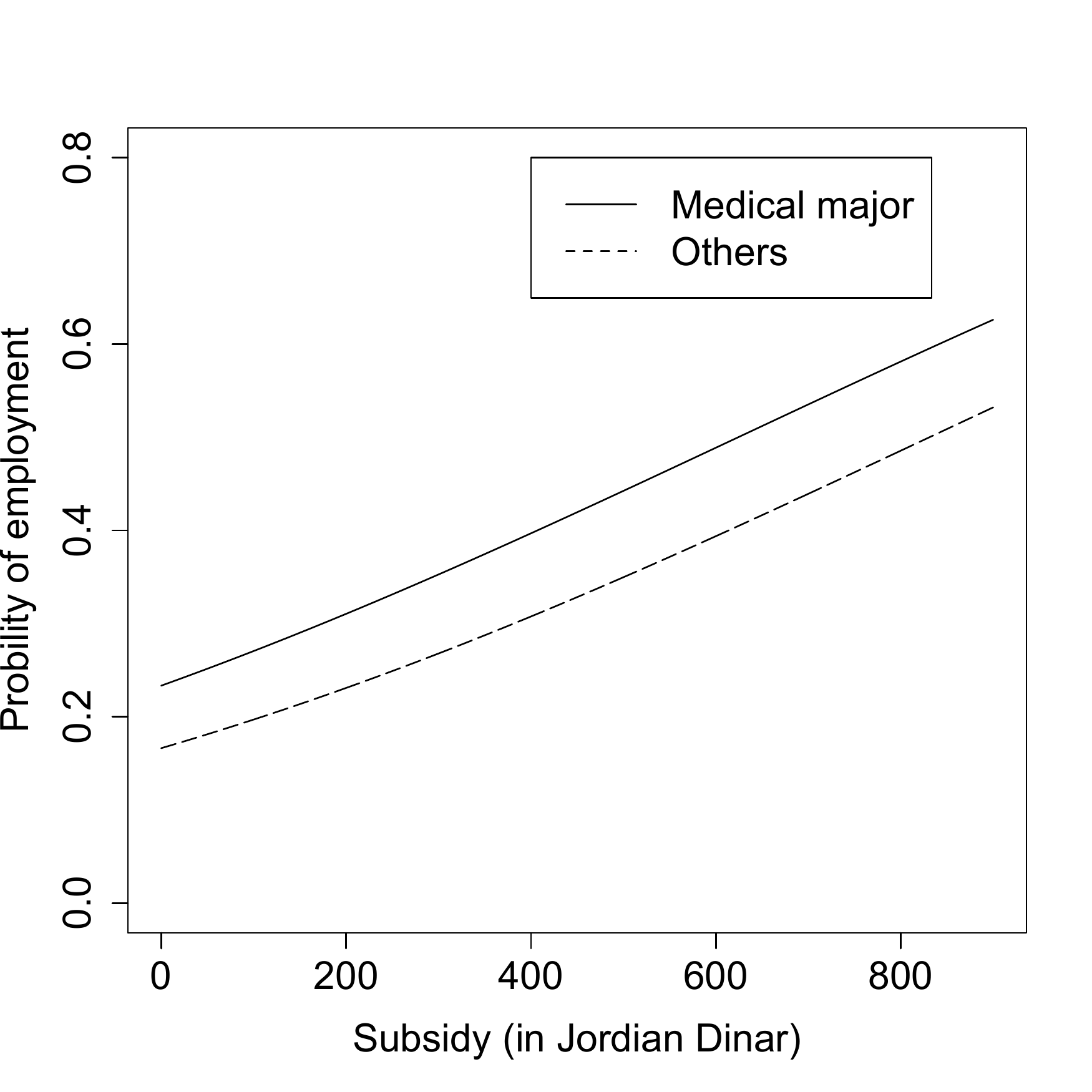}
    \caption{The fitted probabilities from the choice equation}
	\caption*{\footnotesize The propensity score is increasing in the amount of subsidy. The employment rate almost triples after receiving subsidies. Medical students are more likely to be employed at any given level of subsidy. }
     \label{fig:propensity}
\end{figure}

Table \ref{tbl:heckman} presents the estimates from the Heckman two-step method. Similar to the results in \cite{groh2016wage}, we find that wage subsidies significantly increase the chance of finding a job after graduation. In Figure \ref{fig:propensity}, we plot the probabilities as functions of subsidies. Notably, at the maximal amount of subsidy, the employment rate almost triples compared with the case with no subsidy. Moreover, the employment rate is higher among students with medical majors at any given level of subsidy. From the policy-maker's perspective, this implies that helping medical students to land their first job is less expensive. 

\begin{figure}[!htbp]    
	\centering
    \includegraphics[width=0.6\textwidth]{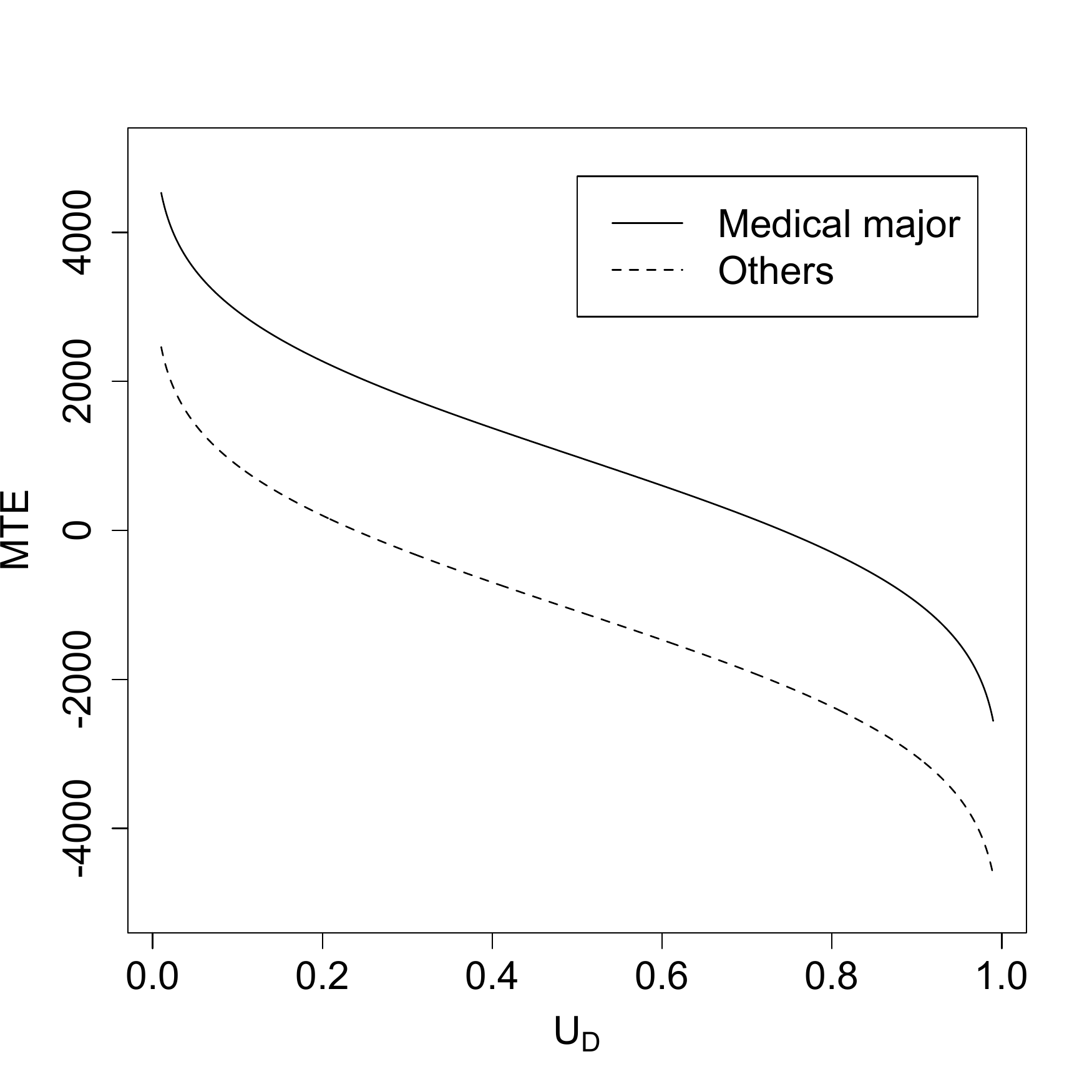}
    \caption{MTE estimated from a normal selection model}
	\caption*{\footnotesize The MTEs are decreasing, indicating that there is positive selection. Individuals who are more likely to find a job after graduation tend to benefit more from it for their long-term career prospects.}
    \label{fig:mte}
\end{figure}

In the normal selection model, \cite{heckman2003simple} showed that the MTE is 
\[
\text{MTE}(x,u) = x'(\beta_1 - \beta_0) - (\rho_1\sigma_1 - \rho_0\sigma_0)\Phi^{-1}(u).
\]
We plot the estimated MTE curves in Figure \ref{fig:mte}. Remarkably, first, the MTE curves are downward-sloping, suggesting that individuals positively select into the treatment. Specifically, individuals who are more likely to find a job after graduation tend to benefit more from it for their long-term career prospects. Second, the marginal effect can be negative among individuals with high $U_D$, meaning that wage subsidies can be potentially hard for individuals with low willingness to work.\footnote{A possible explanation for the negative effect is the stigmatization toward voucher users \citep{burtless1985targeted}.} From the policy-maker's perspective, it is cheaper to help medical students land their first job. 

\begin{figure}[!htbp]
\begin{subfigure}{.5\textwidth}
  \centering
  \includegraphics[width=\linewidth]{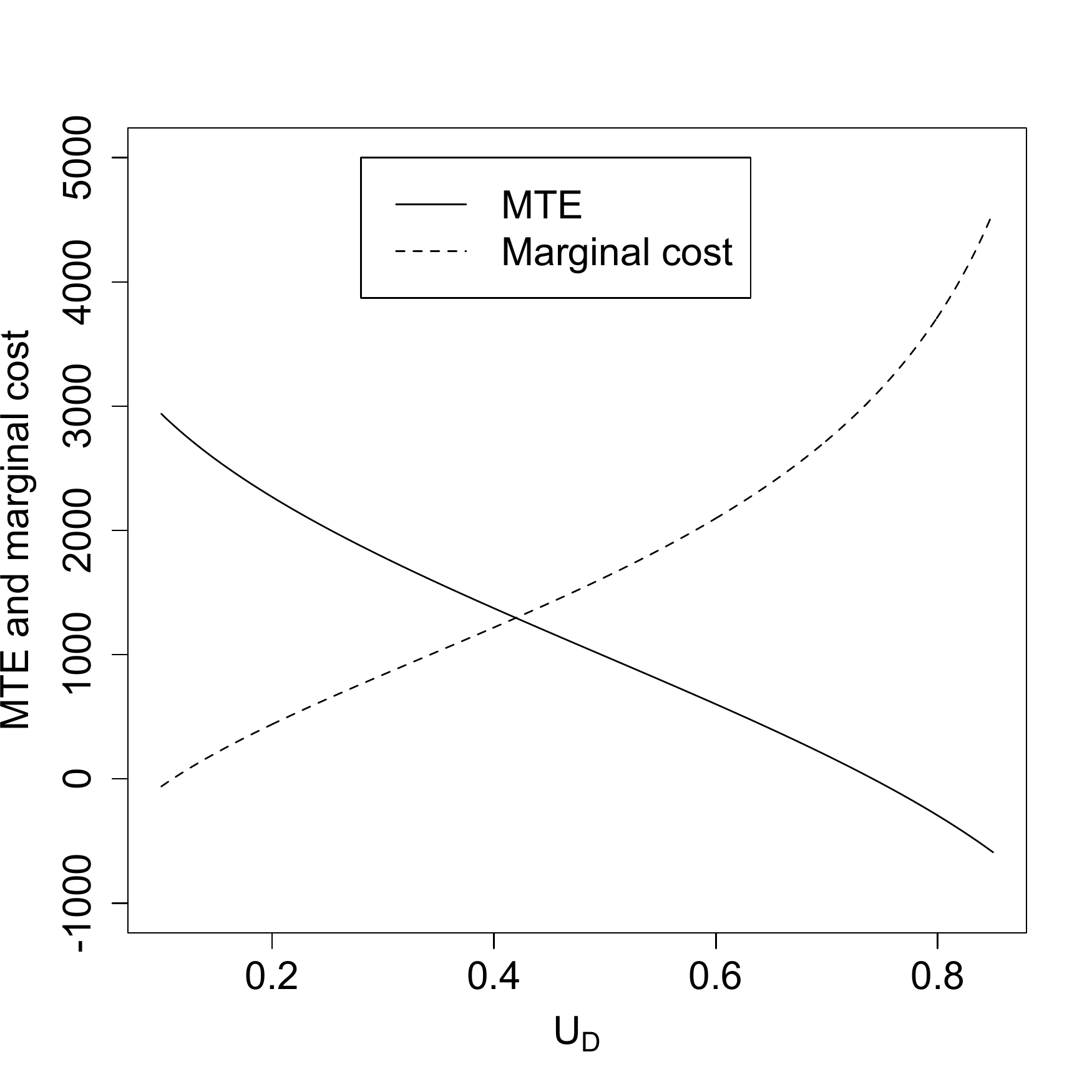}
  \caption{Medical majors}
\end{subfigure}%
\begin{subfigure}{.5\textwidth}
  \centering
  \includegraphics[width=\linewidth]{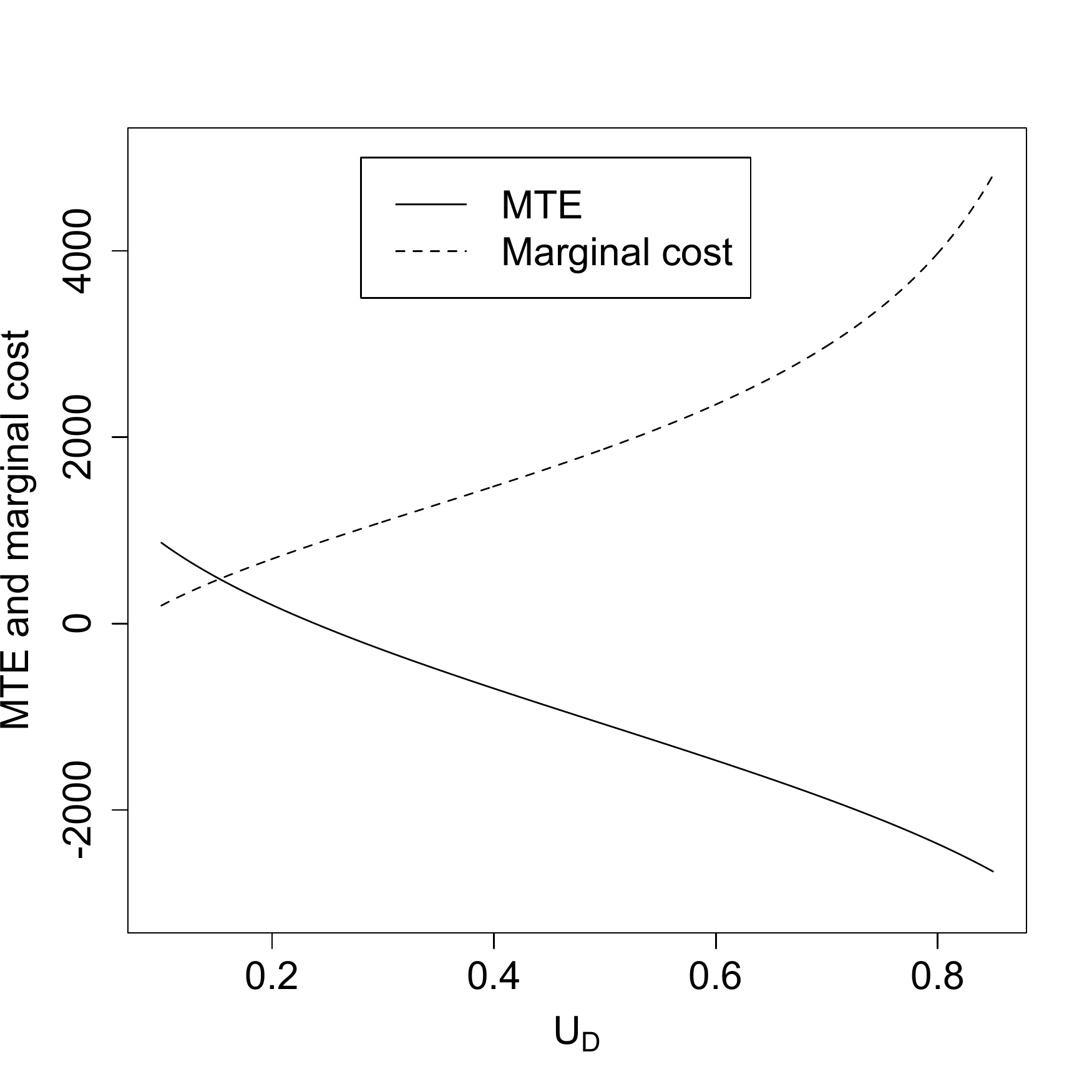}
  \caption{Other majors}
\end{subfigure}
\caption{MTE curves and marginal costs for different majors}
\caption*{We find the optimal subsidies by locating the intersection of the MTE and the marginal cost. The optimal subsidies substantially differ for the two groups of students. Medical students tend to benefit more from wage subsidies.}
\label{fig:policy}
\end{figure}

Combining the previous results, we plot the MTE curves and marginal cost of subsidies in Figure \ref{fig:policy}. Based on Proposition \ref{prop:optimality_condition}, we examine the optimal take-up rate at the intersection of the MTE and marginal cost curves. The optimal subsidies substantially differ for the two groups of students. As discussed earlier, as students with medical majors tend to benefit more from subsidies, the optimal take-up rate is higher than that for students from other majors. In fact, by substituting the optimal take-up rate into the inverse of propensity scores, we find that the optimal subsidy for medical students is about JOD $375$, which is approximately one-third of the amount provided in the experiment. However, regarding students from other majors, the estimated optimal subsidy is negative (JOD $-75$). As negative subsidies are excluded in this context, we have a corner solution, which implies that the policy-maker should not provide subsidies for this group of students. 

Our results suggest that the subsidy set in the experiments is much higher than their optimal level (in terms of the welfare function we defined). The insignificant long-term effect found in \cite{groh2016wage} may be a result of excessive amounts of subsidies that may draw individuals with lower returns. The welfare outcome may improve by lowering the amount of subsidy. Moreover, its targeting efficiency can be further enhanced to exploit the heterogeneity in treatment effects and take-ups.

\section{Conclusion} \label{sec:conclusion}
In this study, we examine the problem of allocating subsidies based on individual characteristics. We adopt the MTE framework to analyze the characterization, identification, and welfare properties of subsidy rules. Our results show that subsidy rules generally outperform policies that directly mandate the treatment. In our empirical example, we estimate the optimal wage subsidy using a parametric MTE model. More flexible methods that do not impose parametric assumptions are provided in the appendix. Their theoretical properties are of interest for future studies.

\appendix

\numberwithin{equation}{section}
\numberwithin{proposition}{section}

\section{Proofs of the Results in the Main Text} \label{sec:proofs}

\begin{proof} [Proof of Proposition \ref{prop:welfare_representation}]
	Given that $Y = (1-D)Y_0 + DY_1$, we have
	\begin{align*}
		\mathbb{E} [ Y^\pi ] & = \mathbb{E} [ D^\pi(Y_1 -Y_0) ] + \mathbb{E} \left[ Y_0 \right].
	\end{align*}
	By the law of iterated expectations 
	\begin{align*}
		\mathbb{E} \left[ D^\pi (Y_1 -Y_0) \right] & = \mathbb{E}\left[  \mathbb{E} \left[ \mathbf{1}\{g(X,W,\pi(X,W))  \geq U_D\}(Y_1 -Y_0) \mid U_D,X,W \right]\right]  \\
		& = \mathbb{E}\left[  \mathbb{E} \left[ \mathbf{1}\{g(X,W,\pi(X,W))  \geq U_D\} \mid U_D,X,W \right] \mathbb{E}\left[(Y_1 -Y_0) \mid U_D,X \right]\right] \\
		& = \mathbb{E} \left[ \mathbf{1}\{g(X,W,\pi(X,W)) \geq U_D\} \text{MTE}(X,U_D) \right] \\
		& = \mathbb{E} \left[ \mathbb{E} \left[ \mathbf{1}\{g(X,W,\pi(X,W)) \geq U_D\} \text{MTE}(X,U_D) \mid X,W \right] \right] \\
		& = \mathbb{E} \left[ \int_0^{g(X,W,\pi(X,W))} \text{MTE}(X,u) du \right]. 
	\end{align*}
	The second line follows from the fact that the indicator $\mathbf{1}\{g(X,W,\pi(X,W))  \geq U_D\}$ is $\sigma(X,W,U_D)$-measurable and hence can be taken outside the conditional expectation operator. The third line follows from the definition of MTE in (\ref{eqn:def-MTE}). The last line follows from the normalization that $U_D | X,W$ follows the uniform distribution on $[0,1]$.
	For the cost function, we have
	\begin{align*}
	\mathbb{E}\left[C^\pi\right] 
	& = \mathbb{E}\left[c(X,W,\pi(X,W),D^\pi)\right] \\
	& = \mathbb{E}\left[\mathbb{E}\left[c(X,W,\pi(X,W),D^\pi)\mid X,W\right]  \right],\\
	&= \mathbb{E}\left[c(X,W,\pi(X,W),1)\cdot g(X,W,\pi(X,W))\right] \\
	& \hspace{8pt}+ \mathbb{E}\left[c(X,W,\pi(X,W),0)\cdot (1 - g(X,W,\pi(X,W)))\right] 
\end{align*}
where the last equality follows from $\mathbb{E}\left[D\mid X,W,Z\right] = g(X,W,Z)$ and that $D$ is binary.
\end{proof}

\begin{proof} [Proof of Proposition \ref{prop:optimality_condition}]
Considering that $g_{x,w}$ is a one-to-one mapping from $\mathcal{Z}$ to the unit interval $[0,1]$, maximizing over $z$ can be equivalently solved by maximizing over $u$ with change of variables. Formally, following the expressions derived in the proof of Proposition \ref{prop:welfare_representation}, we have
\begin{align*}
\mathbb{E}[Y^\pi \mid X = x,W = w] 
&= \mathbb{E}[Y_0\mid X = x,W = w] +  \int_0^{g(x,w,\pi(x,w))} \text{MTE}(x,u') du'\\
&= \mathbb{E}[Y_0\mid X = x,W = w] + \int_0^{u_{x,w}} \text{MTE}(x,u') du',
\end{align*}
where $u_{x,w} = g_{x,w}(\pi(x,w)) = g(x,w,\pi(x,w))$. Similarly, we can write
	\begin{align*}
	\mathbb{E}\left[C^\pi\mid X=x,W=w\right] 
	&= c(x,w,g_{x,w}^{-1}(u_{x,w}),1)\cdot u_{x,w} + c(x,w,g_{x,w}^{-1}(u_{x,w}),0)\cdot (1 - u_{x,w}).
\end{align*}
As $g_{x,w}$ is assumed to be a one-to-one mapping, for each $X = x, W = w$, we can find the welfare-maximizing subsidy $\pi(x,w) \in[z_l,z_u]$ by maximizing the welfare over $u_{x,w}\in{I_{x,w}}=\{ g(x,w,z): z \in [z_l,z_u] \}$.

Then, the first-order condition for the optimization problem can be found by differentiating $\mathbb{E}[Y^\pi - C^\pi]$ with respect to $\pi^*(x,w) \in[z_l,z_u]$ and substituting $u_{x,w} = g_{x,w}(\pi(x,w))$ as follows:
	\begin{align*} 
	\begin{split}
		&\text{MTE}(x,u^*_{x,w}) \cdot \left.\frac{d}{dz}g_{x,w}(z) \right|_{z = g_{x,w}^{-1}(u^*_{x,w})}\\
		&= u^*_{x,w}\cdot\left.\frac{d}{dz} c(x,w,z,1)\right|_{z=g_{x,w}^{-1}(u^*_{x,w})} + c(x,w,g_{x,w}^{-1}(u_{x,w}),1) \cdot \left.\frac{d}{dz}g_{x,w}(z) \right|_{z = g_{x,w}^{-1}(u^*_{x,w})} \\
		+ & (1 - u^*_{x,w})\cdot\left.\frac{d}{dz} c(x,w,z,0)\right|_{z=g_{x,w}^{-1}(u^*_{x,w})} - c(x,w,g_{x,w}^{-1}(u^*_{x,w}),0) \cdot \left.\frac{d}{dz}g_{x,w}(z) \right|_{z = g_{x,w}^{-1}(u^*_{x,w})}.
	\end{split}
	\end{align*} 
After calculating the corresponding derivatives for $c(x,w,z,d) = z\cdot d$, we obtain the results of Proposition \ref{prop:optimality_condition}.
\end{proof}

\begin{proof} [Proof of Proposition \ref{prop:positive_seleciton}]
Following the proof of Proposition \ref{prop:optimality_condition}, the first-order derivative and the optimal take-up rate problem defined in Equation (\ref{eqn:optim_prob}) is
\[
\text{MTE}(x,u) - g_{x,w}^{-1}(u) - u  \cdot \frac{d}{du}g_{x,w}^{-1}(u).
\] 
Direct calculations yield that the second-order derivative is
\[
\frac{\partial}{\partial u}MTE(x,u) - 2 \cdot [\left.\frac{d}{du} g_{x,w}(u)]^{-1}\right|_{u=g^{-1}(u)} - u\cdot \frac{d^2}{du^2} g_{x,w}^{-1}(u).
\]
By applying the formula for derivatives of inverse functions, it is obvious that the second-order derivative is guaranteed to be non-positive because $\frac{\partial}{\partial u}MTE(x,u) \leq 0$ (Assumption \ref{assu:positive_selection}), $\frac{d}{dz}g_{x,w}(z) \geq 0$ (Assumption \ref{assu:invertibility}), and $\frac{d^2}{dz^2}g_{x,w}(z) \leq 0$.
The results follow from that the objective function of the optimal take-up rate problem is concave.
\end{proof}

\begin{proof} [Proof of Proposition \ref{prop:negative_seleciton}]
From Proposition \ref{prop:welfare_representation}, we know that the objective function of the optimal take-up rate problem (\ref{eqn:optim_prob}) is given by
\[
f(u_{x,w}) = \int_0^{u_{x,w}} \text{MTE}(x,u')du'
\]
and that its first-order derivative is MTE. Therefore, by Assumption \ref{assu:negative_selection}, the objective function is convex as the MTE is increasing, and its solution is either the left endpoint $\underline{u}_{x,w} = g(x,w,z_l)$ or the right endpoint $\bar{u}_{x,w} = g(x,w,z_u)$ of the feasible region $I_{x,w}$. To complete the proof, observe that  
\begin{align*}
	f(\bar{u}_{x,w}) - f(\underline{u}_{x,w}) = \int^{\bar{u}_{x,w}}_{\underline{u}_{x,w}} \text{MTE}(x,u') du' = \int^{g(x,w,z_u)}_{g(x,w,z_l)} \text{MTE}(x,u') du'.
\end{align*}

\end{proof}

\begin{proof} [Proof of Proposition \ref{prop:value_instrument} and \ref{prop:irrelevance_instrument}]
	Without the loss of generality, we assume that $Y_0 = 0$ in the proof of propositions \ref{prop:value_instrument}, \ref{prop:irrelevance_instrument}, and \ref{prop:first_best}.
	Following the proof of Proposition \ref{prop:welfare_representation}, we have
	\begin{align} \label{eqn:S*-sub}
		S^*_{{\text{sub}}} & = \sup_\pi \{ \mathbb{E} \left[ \mathbf{1}\{g(W,\pi(W)) \geq U_D, U_D \in \text{Supp}(g(W,Z))\} \text{MTE}(U_D)  \right] \} \\
		& = \sup_\pi  \mathbb{E} \left[  \int_{B_\pi(W)} \text{MTE}(u) du  \right],
	\end{align}
	where $B_\pi(w) = [0,g(w,\pi(w))] \cap \text{Supp}(g(W,Z))$.
	In contrast, by the independence between $W$ and $(U_1,U_0,U_D)$, we have
	\begin{align*}
		S^*_{\text{dir}} & = \sup_{\tau \text{ direct policy}} \left\{ \mathbb{E} \left[ \tau(W,Z)(Y_1 - Y_0)\mathbf{1}\{U_D \in \text{Supp}(g(Z,W))\}  \right]  \right\} \\
		& = \sup_{\tau \text{ direct policy}} \mathbb{E}[\tau(W,Z)] \mathbb{E} \left[  \int_{\text{Supp}(g(W,Z))} \text{MTE}(u) du  \right] , \\
		& = \sup_{\tau \in \{0,1\}} \tau \mathbb{E} \left[  \int_{\text{Supp}(g(W,Z))} \text{MTE}(u) du  \right] ,
	\end{align*}
	The last equality follows from the fact that $\tau$ only affects the welfare through the expectation $\mathbb{E}[\tau(W,Z)]$. As we can choose $\pi$ such that $\text{Supp}(g(W,Z)) \subset B_\pi$, we have $S^*_{{\text{sub}}} \geq S^*_{\text{dir}}$. However, the optimal welfare achieved by constant policies is equal to 
	\begin{align*}
		S^*_{\text{con}} = \sup_{\tau \in \{0,1\}} \tau \mathbb{E} \left[  \int_{\text{Supp}(g(W,Z))} \text{MTE}(u) du  \right] ,
	\end{align*}
	which is equal to $S^*_{\text{dir}}$.
	If the supremum in the definition of $S^*_{{\text{sub}}}$ is achieved by a unique policy $\pi^*$ such that $g(W,\pi^*(W))$ lies in the interior of $\text{Supp}(g(W,Z))$ with positive probability, then $S^*_{{\text{sub}}} > S^*_{\text{dir}}$.
\end{proof}

\begin{proof} [Proof of Proposition \ref{prop:first_best}]
	The welfare of an infeasible policy	$\tilde{\tau}$ is
	\begin{align*}
	& \mathbb{E} \left[ Y^{\tilde{\tau}} \mathbf{1}\{U_D \in \text{Supp}(g(W,Z)) \} \right] \\
	= & \mathbb{E}\left[\mathbb{E} \left[ (Y_1-Y_0) \widetilde{\tau}(W,Z,U_D)\mathbf{1}\{U_D \in \text{Supp}(g(W,Z)) \}  \mid W,Z, U_D\right]\right] \\
	= & \mathbb{E}\left[\tilde{\tau}(W,Z,U_D)\mathbf{1}\{U_D \in \text{Supp}(g(W,Z)) \} \mathbb{E} \left[ (Y_1-Y_0)   \mid X, W, U_D\right]\right] \\
	= & \mathbb{E}\left[\tilde{\tau}(W,Z,U_D)\mathbf{1}\{U_D \in \text{Supp}(g(W,Z)) \} \cdot \text{MTE}(U_D)\right] , 
	\end{align*}
where the first equality is an application of the law of iterated expectation and the third equality holds as $(Z,W) \perp (U_1,U_0, U_D) \mid X$. From the above equation, it is clear that the optimal infeasible policy assigns the treatment status whenever $\text{MTE}(u) \geq 0$. Therefore, $\tilde{\tau}^*(u) =\mathbf{1}\{\text{MTE}(u)\geq 0\}$. Because MTE is decreasing, we have $\{u\mid\text{MTE}(u)\geq 0\} = [0,u^*]$ for some $u^*$. Then, the first-best (identified) welfare, $S^*_{\text{fb}}$, is equal to the integral of the MTE over the region $[0,u^*] \cap \text{Supp}(g(W,Z))$. Subsequently, the result follows by observing the expression of $S^*_{\text{sub}}$ in (\ref{eqn:S*-sub}).

\end{proof}


\begin{proof} [Proof of Proposition \ref{prop:identification_observed}]
	Considering that $Y^\pi = D^\pi Y_1 + (1-D^\pi)Y_0$, we have
		\begin{align*} 
		\mathbb{E} [ D^\pi Y_1 \mid X,W] 
		& = \mathbb{E} [ D^\pi Y\mid X,W] 	\\	
		& = \mathbb{E} [ \mathbf{1}\{g(X,W,\pi(X,W))  \geq U_D\} Y\mid X,W] \\
		& = \mathbb{E} [ \mathbf{1}\{g(X,W,Z)  \geq U_D\} Y\mid X,W,Z = \pi(X,W)] \\
		& = \mathbb{E} [ DY\mid X,W,Z = \pi(X,W)].
		\end{align*}
		
		Similarly, we show that $ \mathbb{E} [ (1-D^\pi) Y_1 \mid X,W] =  \mathbb{E} [ (1-D)Y\mid X,W,Z = \pi(X,W)]$. Combining the two steps, we have
		\begin{align*}
		\mathbb{E}[Y^{\pi}\mid X,W]
		& = \mathbb{E}[D^\pi Y_1 + (1-D^\pi)Y_0\mid X,W] \\
		& = \mathbb{E}[DY+ (1-D)Y\mid X,W,Z = \pi(X,W)] \\
		& = \mathbb{E}[Y \mid X,W,Z = \pi(X,W)].
		\end{align*}
The second part of the proposition follows from Proposition \ref{prop:welfare_representation} and that 
	\begin{align*}
		 \mathbb{E}[D^\pi \mid X,W] = \mathbb{E}[D\mid X,W,Z = \pi(X,W)]. 
	\end{align*}
		Therefore, we have
	\begin{align*}
	\begin{split}
		  \mathbb{E}[C(X,W,\pi(X,W),D) \mid X,W] &= 
		  C(X,W,\pi(X,W),0)(1-\mathbb{E}[D\mid X,W,Z=\pi(X,W)])\\
		  & + C(X,W,\pi(X,W),1)\mathbb{E}[D\mid X,W,Z=\pi(X,W)].
	\end{split}
	\end{align*}
	
\end{proof}

\begin{proof}[Proof of Proposition \ref{prop:id_positive_seleciton}]
The result follows directly from Proposition \ref{prop:positive_seleciton}.
\end{proof}

\begin{proof} [Proof of Proposition \ref{prop:id_ranking}]
		
		In the proof of proposition \ref{prop:welfare_representation}, we have shown that
		\begin{align*}
			\mathbb{E} \left[ D^\pi (Y_1 -Y_0) \right] = \mathbb{E} \left[ \mathbf{1}\{g(X,W,\pi(X,W)) \geq U_D\} \text{MTE}(X,U_D) \right].
		\end{align*}
		By the law of iterated expectations and the assumption that $U_D \perp (X,W)$, we have
		\begin{align*}
			\mathbb{E} \left[ D^\pi (Y_1 -Y_0) \right] & = \mathbb{E} \left[\mathbb{E} \left[ \mathbf{1}\{g(X,W,\pi(X,W)) \geq U_D\} \mid X, U_D \right] \right] \\
			& = \mathbb{E} \left[ \text{MTE}(X,U_D) \mathbb{E}\left[\mathbf{1}\{g(X,W,\pi(X,W)) \geq U_D\} \mid X,U_D\right]\right]\\
			& = \mathbb{E} \left[ \text{MTE}(X,U_D) (1-F_{g,\pi}(X,U_D))\right]\\
			& = \langle 1-F_{g,\pi}, \text{MTE} \rangle.
		\end{align*}
The difference in welfare between the two policies $\pi$ and $\pi'$ is
\begin{align*}
S(\pi) - S(\pi') = \langle F_{g,\pi'}-F_{g,\pi}, \text{MTE} \rangle.
\end{align*}
		Then the result follows from the definitions of $\mathcal{M}^*$ and $\mathcal{M}^\times$ in Equation (\ref{eqn:dual_cone}).
\end{proof}

\begin{proof}[Proof of Proposition \ref{prop:partial_monotone}]
Fix $X=x$. Considering that the first-order derivative of the objective function is
\[
f'(u) = \frac{d}{du}\int_0^{u} \text{MTE}(x,u') du' = \text{MTE}(x,u),
\] 
MTE$(x,u_0) > 0$ implies that the objective is increasing at $u_0$. Combined with the fact that $f$ is concave when MTE$(x,\cdot)$ is decreasing, we know that 
\begin{align*}
	\argmax_{u\in[0,1]} \text{ } \int_0^{u} \text{MTE}(x,u') du' \geq u_0.
\end{align*}
The proof for the case of MTE$(x,u_0) < 0$ is similar.
\end{proof}

\section{Semiparametric and nonparametric methods of policy learning}
\subsection{Semiparametric method}\label{app:semiparametric}

Although the local IV approach for the estimation of MTE is fully nonparametric, it comes at the cost of slower convergence. In applications with smaller sample size, the nonparametric approach might be inappropriate. In this section, we describe a semiparametric approach, which is less data-demanding than the nonparametric method but more flexible than parametric methods.

First, to facilitate the estimation of MTE, it is common in the MTE literature \citep{carneiro2009estimating,carneiro2011estimating,brinch2017beyond,zhou2019marginal} to specify the potential outcomes in Equation (\ref{eqn:outcome_equation}) as a linear representation
\begin{align*}
	Y_1 = X \beta_1 + U_1 \text{, and } Y_0 = X \beta_0 + U_0,
\end{align*}
where $\beta_1$ and $\beta_0$ are the unknown parameters. Another often used useful simplification is to assume that $(X,W,Z)$ are jointly independent of $(U_1,U_0,U_D)$. Considering the earlier specification, we have
\begin{align*}
	\mathbb{E} \left[ Y \mid X = x, g(X,W,Z) = p \right] & = x\left( p\beta_1 + (1-p)\beta_0 \right) + \mathbb{E} \left[ (U_1 - U_0) \mathbf{1}\{U_D \leq p\} \right] \\
	& =  x\beta_0 + xp\left( \beta_1 -\beta_0 \right) + \lambda(p),
\end{align*}
where $\lambda(p) = \int_0^p \mathbb{E} \left[ U_1 - U_0 \mid U_D = u \right] du $ has an unknown form. Taking partial derivative with respect to $p$, we get a partial linear specification of the MTE:
\begin{align*} 
	\text{MTE}(x,u) = x(\beta_1 - \beta_0) + \lambda'(u).
\end{align*}
We consider the semiparametric estimators $\hat{\beta}_1$ and $\hat{\beta}_0$ defined in \cite{carneiro2009estimating}, which is based on the semiparametric estimation procedure developed by \cite{robinson1988root}. Under mild regularity conditions, they are $\sqrt{n}$-consistent.

The semiparametric approach relies on the parametrization of the selection correction term
\begin{align*}
	\lambda(p) = \lambda(p\mid\theta)
\end{align*}
and of the propensity score
\begin{align*}
    g(X,W,Z) = g(X,W,Z\mid\gamma),
\end{align*}
where $\theta$ and $\gamma$ are the parameters. The common choices of $\lambda(p)$ include finite-order polynomials \citep{brinch2017beyond} and the inverse of normal cumulative distribution function \citep{carneiro2011estimating}. Regarding the propensity score $g(X,W,Z\mid\gamma)$, popular choices include the logit and probit models.

The aforementioned parametrization implies that 
\begin{align*}
	\mathbb{E} \left[ Y \mid X = x, g(X,W,Z) = p \right] 
	& =  x\beta_0 + xp\left( \beta_1 - \beta_0 \right) + \lambda(p\mid\theta),
\end{align*}
which can be estimated as follows:

Step 1:	estimate $\gamma$ according to the specified binary choice model. For example, run probit regression if the probit model is assumed

Step 2: plug in the estimates $\hat{\gamma}$ to obtain the generated regressor $\hat{P} = g(X,W,Z\mid\hat{\gamma})$ 

Step 3: estimate $(\beta_1,\beta_0,\theta)$ by solving the nonlinear least-squares problem
\begin{align*}
    \min_{\beta_0,\beta_1,\theta} \sum_{i=1}^N [Y_i - X\beta_0 - X\hat{P}\beta_1 - \lambda(\hat{P}\mid\theta) ]^2
\end{align*}
If desired, additional constraints can be incorporated in the regression to guarantee that the MTE curve $\text{MTE}(x,u) = x(\beta_1 - \beta_0) + \lambda'(u)$ satisfies a certain shape restriction, such as monotonicity.

Step 4: for each $X=x$ and given the estimates $(\hat{\beta}_0,\hat{\beta}_1,\hat{\gamma})$, we can estimate the optimal participation rate $\hat{u}^*_{x,w}$ by solving the following equation:
\begin{align*} 
	\text{MTE}(x,u) = x(\beta_1 - \beta_0) + \lambda'(u) = 0.
\end{align*}
The optimal incentive assignment $z^*_{x,w}$ is then estimated by inverting the propensity score 
\begin{align*} 
	\hat{z}^*_{x,w} = g^{-1}_{x,w}(u^*),
\end{align*}
where $g_{x,w}(z) = g(x,w,z\mid\hat{\gamma})$.

Note that Steps 1-3 add up to a two-stage regression, which can be developed as a generalized method of moments estimator that is asymptotically normal under regularity conditions. Considering that $\hat{z}^*_{x,w}$ is a function of the estimates $(\hat{\beta}_0,\hat{\beta}_1,\hat{\gamma})$, we can derive its asymptotic properties based on the asymptotic distribution of $(\hat{\beta}_0,\hat{\beta}_1,\hat{\gamma})$ and the delta method. 

\subsection{Policy learning under monotonicity}
In this subsection, we briefly describe how monotonicity of MTE can aid the estimation for optimal policy and outline an estimation procedure. Specifically, the approach has the advantage of being fully automated as no tuning parameter is required, although the MTE was non-parametrically unspecified. The approach is simple, particularly when the propensity score is concave.\footnote{A third approach is to assume the monotone treatment response ($Y_1 \geq Y_0$) and apply the monotone regression of $Y$ on $Z$.} Notably, the method also serves as a new approach for estimating the MTE. 

When MTE is monotone, say, decreasing, the conditional mean $\mathbb{E}[Y|P]$ is concave in $P$. If the propensity score $P$ was known, concavity suggests that we can employ a concave regression of $Y$ on $P$. Considering that concave regression is uniformly consistent under regularity condition \citep{seijo2011nonparametric}, the optimal participation rate can also be consistently estimated from maximizing the fitted regression. Formally, let $Q_n$ be a nonparametric least-square estimator that solves
\begin{align*}
\min_{Q_n \text{ concave}} \sum_{i = 1}^{n} (Y_i - Q_n(P_i)).
\end{align*} 
Considering that $Q_n$ is only uniquely defined on the observed values of $P_i$, we linearly interpolate between points of $P_i$. Additionally, let $\partial Q_n (p)$ denote the subdifferentail of $Q_n$. \cite{seijo2011nonparametric} shows that, under mild conditions, $Q_n$ is uniformly consistent of $\mathbb{E}[Y|P]$. Furthermore, if $\mathbb{E}[Y|P]$ is differentiable, then the estimated subdifferential is also uniformly consistent with the derivative. As differentiability is typically assumed for the identification of MTE, we may construct a uniformly consistent estimator of the MTE curve from the convex regression.

As $Q_n$ is uniformly consistent, we can estimate the optimal participation rate by finding $p_n = arg\max_p Q_n(p)$.\footnote{Alternatively, we can obtain $p_n$ by solving for the root of the estimated MTE curve.} By inference, \citep{ghosal2017univariate} derived the asymptotic distribution of $p_n$, given the homoscedasticity and smoothness of $\mathbb{E}[Y|P]$.\footnote{In our context, their smoothness assumption requires that $MTE$ is not flat at its root and is continuously differentiable.} However, it is not trivial to construct a confidence interval based on the asymptotic theory because the limiting distribution is not pivotal and depends on the second-order derivative of $\mathbb{E}[Y|P]$.

So far, we have treated the propensity score as known, which is not the case in most applications. The existing theory on convex regression should be extended to apply the aformentioned approach. However, if both $E[Y|P]$ and $p(Z) = \mathbb{E}[D|Z]$ are concave, then $\mathbb{E}[Y|p(Z)]$ is also concave in $Z$. Therefore, we can estimate $\mathbb{E}[Y|Z]$ using concave regression and solve for the optimal policy by  maximizing the regression function $\mathbb{E}[Y|Z]$. The aforementioned theory of convex regression is readily applicable in this case. Note that this method is similar to an intent-to-treat approach with \citep{kitagawa2018should}, but the shape restriction that resulted from the MTE framework allows us to deal with non-discrete assignments. Specifically, the MTE framework provides a convenient yet a valid method in economics theory to impose structures over the instruments $Z$.

\section{Primitive Conditions for Monotone MTE}\label{sec:primitive_monotone}

This section provides a set of primitive conditions for the MTE to be monotonic in Example \ref{eg:general-roy}.

\begin{proposition} [Primitive Conditions for Monotone MTE] \label{prop:primitive_monotone}
	Let the treatment be determined by 
	\begin{align*}
		D = \mathbf{1}\{\phi(X,W,Z,\Delta,V) \geq 0\},
	\end{align*}
	where $(Z,W) \perp (\Delta, V)$ and $\Delta = Y_1 - Y_0$. If the function $\phi$ satisfies the following two conditions
	\begin{enumerate} [label = (\roman*)]
		\item $\phi$ is increasing (resp. decreasing) in $\Delta$, and
		\item for any $(x,w,z),(x',w',z')$, $\phi(x,w,z,\delta_0,v_0) > \phi(x',w',z',\delta_0,v_0)$ for some $\delta_0,v_0 \implies \phi(x,w,z,\delta,v) > \phi(x',w',z',\delta,v)$ for all $\delta,v$.
	\end{enumerate}
	Then we can construct (1) a random variable $U_D$, such that $(W,Z) \perp U_D \mid X$ and $U_D \mid X \sim \text{Unif}[0,1]$ and (2) a function $g$, such that the treatment selection is represented by
	\begin{align*}
		D = \mathbf{1}\{g(X,W,Z) \geq U_D\}.
	\end{align*}
	Under this construction, the MTE curve defined by MTE$(x,u) \equiv \mathbb{E} \left[ \Delta \mid X=x, U_D = u \right]$ is decreasing (resp. increasing) in $u$.
\end{proposition}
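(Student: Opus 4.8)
The plan is to reduce the generalized Roy rule to the additively-separable latent-index form of \cite{vytlacil2002independence} and then read the sign of the selection off condition (i). Throughout I would fix $X=x$ and treat $\phi$, the propensity score, and the individual threshold defined below as functions of $(w,z,\delta,v)$ alone, since both the representation and the MTE are stated conditionally on $X$.

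First I would use condition (ii) to manufacture a scalar index for the instruments. For fixed $x$ declare $(w,z)\succeq(w',z')$ whenever $\phi(x,w,z,\delta,v)\ge\phi(x,w',z',\delta,v)$ for some---equivalently, by (ii), for every---$(\delta,v)$. Condition (ii) makes $\succeq$ a total preorder whose equivalence classes are exactly the sets on which $\phi(x,\cdot,\cdot,\delta,v)$ agrees identically in $(\delta,v)$, so an order-preserving real representation $\theta(x,w,z)$ of the classes yields $\phi(x,w,z,\delta,v)=\Psi(\theta(x,w,z),\delta,v)$ with $\Psi$ strictly increasing in its first argument. The payoff is the Imbens--Angrist no-defiers property: if $\theta(x,w,z)\ge\theta(x,w',z')$ then the treatment event at $(w,z)$ contains that at $(w',z')$ pointwise in $(\Delta,V)$.

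Second I would build $(g,U_D)$. Set $g(x,w,z)=\mathbb{P}(\phi(x,w,z,\Delta,V)\ge0\mid X=x)$, which is nondecreasing in $\theta$ by the nesting just established, and for each draw of $(\Delta,V)$ define the threshold $\theta^{*}(\Delta,V)=\inf\{\theta(x,w,z):\phi(x,w,z,\Delta,V)\ge0\}$, so that $D=\mathbf 1\{\theta(x,W,Z)\ge\theta^{*}(\Delta,V)\}$ up to a null set. Putting $U_D=F(\theta^{*}(\Delta,V))$ with $F$ the conditional CDF of $\theta^{*}$ given $X=x$, independence $(W,Z)\perp(\Delta,V)\mid X$ makes $U_D$ independent of $(W,Z)$ given $X$, while the identity $g(x,w,z)=\mathbb{P}(\theta^{*}\le\theta(x,w,z))=F(\theta(x,w,z))$ simultaneously shows $U_D\mid X\sim\mathrm{Unif}[0,1]$ and gives $D=\mathbf 1\{g(X,W,Z)\ge U_D\}$, the claimed representation (residual mass off the support of $\theta$ can be spread with an independent uniform draw). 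This is the Vytlacil equivalence specialized to the present rule.

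Third, for monotonicity of the MTE I would invoke condition (i). Since $\phi$ is increasing in $\Delta$, each sublevel event $\{\theta^{*}\le t\}$ equals a treatment event $\{\phi(x,w_t,z_t,\Delta,V)\ge0\}$ and is therefore increasing in $\Delta$, so $\theta^{*}$ is stochastically decreasing in $\Delta$; writing $\mathrm{MTE}(x,u)=\mathbb{E}[\Delta\mid X=x,\ \theta^{*}=F^{-1}(u)]$, the claim reduces to $t\mapsto\mathbb{E}[\Delta\mid\theta^{*}=t]$ being nonincreasing. I expect this last aggregation over the auxiliary unobservable $V$ to be the delicate part, not the index construction or the equivalence. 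The difficulty is genuine: condition (i) delivers the monotonicity of $\theta^{*}$ in $\delta$ only slice-by-slice in $v$, i.e.\ a \emph{one-directional} stochastic ordering of $\theta^{*}$ given $\Delta$, and such an ordering need not reverse into $\Delta\mid\theta^{*}=t$ being stochastically decreasing in $t$ once different $v$-slices are mixed. Closing the step therefore calls for upgrading the association between $\Delta$ and the latent rank to a likelihood-ratio / MTP$_2$ type positive dependence, and the plan would be to identify the weakest such structure (implied by how $V$ enters $\phi$ together with (ii)) under which the conditional law of $\Delta$ given $\{\theta^{*}\le t\}$ is stochastically nonincreasing in $t$. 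The ``resp.\ increasing'' case follows by reversing every sign.
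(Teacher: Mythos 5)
Your construction of the latent index and of $(g,U_D)$ is, up to notation, exactly the paper's. The paper (following Vytlacil's 2006 note) writes $\phi(x,w,z,\delta,v)=\tilde\phi_2(\tilde\phi_1(x,w,z),\delta,v)$ with $\tilde\phi_1(x,w,z)=\phi(x,w,z,\delta_0,v_0)$, which is your order-preserving representation $\theta$ of the total preorder that condition (ii) induces; it defines the threshold $\tilde\phi_3(\delta,v)$ by $\tilde\phi_2(\tilde\phi_3(\delta,v),\delta,v)=0$, which is your $\theta^{*}$; and it sets $g=F_{3\mid X}(\tilde\phi_1(X,W,Z))$ and $U_D=F_{3\mid X}(\tilde\phi_3(\Delta,V))$, which is your CDF transform. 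So parts (1) and (2) of your plan coincide with the paper's proof and are sound.

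Where you stop, however, the paper does not actually supply the missing step either. Having established that $\tilde\phi_3(\cdot,v)$ is decreasing in $\delta$ for each $v$, the paper concludes in a single sentence that $\mathbb{E}[\Delta\mid X=x,\ F_{3\mid x}(\tilde\phi_3(\Delta,V))=u]$ is monotone in $u$ ``as both $F_{3\mid X}$ and $\tilde\phi_3(\cdot,v)$ are increasing'' (note also the sign slip relative to the earlier claim that $\tilde\phi_3$ is decreasing in $\delta$). Your diagnosis of why this inference is not automatic is correct: conditioning on $\tilde\phi_3(\Delta,V)=t$ restricts $(\Delta,V)$ to a level set, and monotonicity of these level sets in $\delta$ slice-by-slice in $v$ does not by itself make $t\mapsto\mathbb{E}[\Delta\mid\tilde\phi_3(\Delta,V)=t]$ monotone once the $v$-slices are mixed. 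For instance, with $\tilde\phi_3(\delta,v)=-\delta+h(v)$ and $\Delta\perp V$, the conditional mean is $-t$ plus a reweighted average of $h(V)$ whose weights involve $f_\Delta(h(v)-t)$, and that second term can move against $-t$ faster than $t$ itself unless $f_\Delta$ has additional structure (e.g.\ log-concavity). The paper's one-line argument is airtight only in the pure Roy case where $V$ is degenerate, so that conditioning on $U_D=u$ pins down $\Delta$. In short, your proposal is incomplete at precisely the point where the paper's own proof is also incomplete, and closing it genuinely requires the kind of likelihood-ratio / MTP$_2$ positive-dependence strengthening (or a separability restriction on how $V$ enters $\phi$) that you identify.
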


\begin{proof} [Proof of Proposition \ref{prop:primitive_monotone}]
	Consider $\phi$ to be increasing in $\Delta$. Based on \cite{vytlacil2006note}, we have $\phi(x,w,z,\delta,v) = \tilde{\phi}_2(\tilde{\phi}_1(x,w,z),\delta,v)$, where the functions $\tilde{\phi}_1$ and $\tilde{\phi}_2$ are constructed as follows:
	Select any $\delta_0,v_0$, then define $\tilde{\phi}_1(x,w,z) = \phi(x,w,z,\delta_0,v_0)$. Define a correspondence $\tilde{\phi}_2$ as
	\begin{align*}
		\tilde{\phi}_2(\cdot,\delta,v) = \{ \phi(x,w,z,\delta,v): \tilde{\phi}_1(x,w,z) = \cdot \}.
	\end{align*}
	It can be shown that $\tilde{\phi}_2$ is a single-valued function, strictly increasing in the first argument, and increasing in the second argument. Define $\tilde{\phi}_3(\delta,v)$ by $\tilde{\phi}_2(\tilde{\phi}_3(\delta,v),\delta,v) = 0$. $\tilde{\phi}_3$ is well-defined as $\tilde{\phi}_2$ is strictly increasing in its first argument. Then we have
	\begin{align*}
		D & = \mathbf{1}\{\phi(X,W,Z,\Delta,V) \geq 0\} \\
		& = \mathbf{1}\{\tilde{\phi}_2(\tilde{\phi}_1(X,W,Z),\Delta,V) \geq 0\} \\
		& = \mathbf{1}\{\tilde{\phi}_1(X,W,Z) \geq \tilde{\phi}_3(\Delta,V)\}.
	\end{align*}
	Note that $\tilde{\phi}_3$ is decreasing in $\delta$, because for any $v$, $\tilde{\phi}_2$ increases with $\delta$; so, by definition, $\tilde{\phi}_3$ must decrease when $\delta$ increases. Define $F_{3\mid X}$ as the conditional CDF of $\tilde{\phi}_3(\Delta,V) \mid X$. Then let $g(X,W,Z) = F_{3 \mid X}(\tilde{\phi}_1(X,W,Z))$, and $U_D = F_{3 \mid x}(\tilde{\phi}_3(\Delta,V))$. Thus, it holds that $U_D \perp (W,Z) \mid X$, $U_D \mid X \sim \text{Unif}[0,1]$, and $D = \mathbf{1}\{g(X,W,Z) \geq U_D\}$. The MTE 
	\begin{align*}
		\text{MTE}(x,u) & = \mathbb{E} \left[ \Delta \mid X=x, U_D = u \right] \\
		& = \mathbb{E} \left[ \Delta \mid X=x, F_{3 \mid x}(\tilde{\phi}_3(\Delta,V)) = u \right]
	\end{align*}
	is increasing in $u$ as both the function $F_{3 \mid X}$ and $\tilde{\phi}_3(\cdot,v)$ are increasing.
	
	\end{proof}

The function $\phi(X,W,Z,\Delta,V) $ represents the utility achieved. The individual will select into the treatment if and only if the utility is positive. Condition (i) means that the utility is monotonic in the individual treatment effect. In the study on return to schooling, $\Delta$ represents the change in earnings after receiving a certain level of education, in which case $\phi$ is increasing in $\Delta$. Condition (ii) means that the rank of the instrument based on individual's utility is invariant to the values of the individual treatment effect $\Delta$ and the unobserved heterogeneity $V$.

As a side note, under a rank-invariance condition in \cite{vytlacil2006note}, we can always find a transformation such that the treatment choice $g$ is increasing in the transformed value of the instrument. This would be helpful when the monotonicity of $g$ facilitates identification.

\bibliographystyle{chicago}
\bibliography{references.bib}

\end{document}